\journalname{}
\tikzset{->,>=Stealth,every state/.style={thick}}
\newtheorem{theorem}{Theorem}
\newtheorem{lemma}[theorem]{Lemma}
\newtheorem{proposition}[theorem]{Proposition}
\theoremstyle{definition}
\newtheorem*{example*}{Example}
\newtheorem{definition}[theorem]{Definition}
\newcommand\fakeslant[1]{\pdfliteral{1 0 0.3 1 0 0 cm}\pdfliteral{1 0 0 1 0 -0.2 cm}#1\pdfliteral{1 0 0 1 0 0.2 cm}\pdfliteral{1 0 -0.3 1 0 0 cm}}
\newcommand\mathbbsl[1]{\mathbb{\fakeslant{#1}}}
\newcommand{\LL}{{\mathbbsl{L}}}
\newcommand{\AP}{{\mathit{AP}}}
\newcommand{\LTL}[1]{\textsf{\upshape #1}}
\DeclareMathOperator{\XX}{\LTL{X}}
\DeclareMathOperator{\until}{\LTL{U}}
\DeclareMathOperator{\always}{\LTL{G}}
\DeclareMathOperator{\eventually}{\LTL{F}}
\DeclareMathOperator{\buchi}{\LTL{GF}}
\DeclareMathOperator{\cobuchi}{\LTL{FG}}
\newcommand{\change}[1]{{{#1}}}
\begin{document}

\begin{frontmatter}




\title{Strategy Templates for Almost-Sure and Positive Winning of Stochastic Parity Games towards Permissive and Resilient Control\tnoteref{tnt}}
\tnotetext[tnt]{This work is an extended version of~\cite{DBLP:conf/ictac/PhalakarnPH24}. K.~Phalakarn and I.~Hasuo are supported by the ERATO HASUO Metamathematics for Systems Design Project grant No.~JPMJER1603 and the ASPIRE grant No.~JPMJAP2301, JST. S.~Pruekprasert is supported by the KAKENHI grant No.~JP22KK0155, JSPS. This work was partly done while S.~Pruekprasert was affiliated with the National Institute of Advanced Industrial Science and Technology, Tokyo, Japan.}


\author[nii]{Kittiphon Phalakarn\corref{cor}}\ead{kphalakarn@nii.ac.jp}
\author[utokyo]{Sasinee Pruekprasert}\ead{spruekprasert@g.ecc.u-tokyo.ac.jp}
\author[nii,sokendai,imiron]{Ichiro Hasuo}\ead{hasuo@nii.ac.jp}
\cortext[cor]{Corresponding author}

\address[nii]{National Institute of Informatics, Tokyo, Japan}
\address[utokyo]{The University of Tokyo, Tokyo, Japan}
\address[sokendai]{SOKENDAI (The Graduate University for Advanced Studies), Kanagawa, Japan}
\address[imiron]{Imiron Co., Ltd., Tokyo, Japan}

\begin{abstract}
\emph{Stochastic games} are fundamental in various applications, including the control of cyber-physical systems (CPS), where both controller and environment are modeled as players. Traditional algorithms typically aim to determine a single \emph{winning strategy} to develop a controller. However, in CPS control and other domains, permissive controllers are essential, as they enable the system to adapt when additional constraints arise and remain resilient to runtime changes. This work generalizes the concept of \emph{(permissive winning) strategy templates}, originally introduced by Anand et al.~at TACAS and CAV 2023 for deterministic games, to incorporate stochastic games. These templates capture an infinite number of winning strategies, allowing for efficient strategy adaptation to system changes. We focus on two winning criteria (almost-sure and positive winning) and five winning objectives (safety, reachability, Büchi, co-Büchi, and parity). Our contributions include algorithms for constructing templates for each winning criterion and objective and a novel approach for extracting a winning strategy from a given template. Discussions on comparisons between templates and between strategy extraction methods are provided.
\end{abstract}

\begin{keyword}
stochastic game \sep parity game \sep strategy template \sep game-based control \sep permissive controller \sep resiliency
\end{keyword}

\end{frontmatter}


\section{Introduction}

Games on graphs are fundamental in control theory and cyber-physical system (CPS) design~\cite{DBLP:journals/access/TusharYSNASP23}, providing a robust framework for analyzing and developing systems that dynamically interact with their environments. In this context, game-based controllers apply game theory principles to effectively manage interactions between systems and their environments. Control problems are often represented as two-player games between the controller and the environment\change{~\cite{DBLP:journals/access/TusharYSNASP23, DBLP:journals/arcras/MardenS18, DBLP:journals/automatica/LvXJLY24, DBLP:journals/tac/PruekprasertUK16}}: the controller player aims at influencing the system's behavior toward its specified objectives, and the environment player introduces uncertainties and external factors that challenge the controller's decision-making.

\emph{Stochastic games} extend the conventional two-player game framework by introducing probabilistic transitions to capture uncertainty in system dynamics. Conceptually referred to as ``2.5''-player games, stochastic games involve two primary players along with an additional ``0.5'' player, representing the random or stochastic behavior of the environment. Under this setting, players must develop strategies that consider both their opponents' actions and the probabilistic transitions. \change{While stochastic games can capture only limited aspects of uncertainty---due to finite random choices---they nonetheless provide an expressive framework for modeling probabilistic and nondeterministic behaviors in control systems. Consequently, they are widely applied in various domains\change{~\cite{DBLP:journals/access/TusharYSNASP23, DBLP:journals/arcras/MardenS18, DBLP:journals/ejcon/SvorenovaK16}}, particularly in system control and theoretical computer science, where they are used to analyze probabilistic systems and programs.}

\subsection{Related Works}
Traditional game-solving algorithms typically aim to identify a single winning strategy for each player, without explicitly accounting for the strategy's \emph{permissiveness}. However, permissive controllers play a vital role in real-world applications. The concept of permissiveness in control theory, particularly in supervisory control, was formally introduced by Ramadge and Wonham in 1987~\cite{ramadge1987supervisory} and is often regarded as the classical definition of permissiveness. According to their work, a controller is considered more permissive (or less restrictive) than the other if it allows all behaviors permitted by the latter without disabling any additional system behaviors. This notion enhances system flexibility, enabling system adaptation to additional constraints and unpredictable operational condition changes during runtime.

The classical notion of permissiveness has inspired the development of various related concepts in permissive control. For instance, some approaches penalize the controller based on the disable costs associated with each control action~\cite{sengupta1998optimal,DBLP:journals/automatica/LvXJLY24}, while some others focus on maximally permissive controllers constrained by the number of allowable losing loops for the controller player~\cite{DBLP:journals/tac/PruekprasertUK16}. Additionally, permissiveness plays a fundamental role in resilient control and helps addressing uncertainties in a wide range of applications. These include flexible manufacturing systems~\cite{DBLP:journals/automatica/ChenL11,DBLP:journals/ijaac/RezigGAR19}, warehouse automation~\cite{tatsumoto2018application}, and resilient control against potential attacks in CPS~\cite{DBLP:journals/csysl/MaC22}.

In the context of games, the classical concept of permissiveness has been explored in various settings. In parity games, Bernet et al.~\cite{DBLP:journals/ita/BernetJW02} demonstrated that a maximally permissive strategy exists when considering only memoryless strategies and provided an algorithm to construct such a strategy. Another perspective on permissiveness was studied in Muller games~\cite{DBLP:journals/tcs/NeiderR014}, where a maximally permissive strategy is constrained to winning strategies that allow visiting losing loops at most twice. Quantitative measures of permissiveness were introduced in~\cite{DBLP:conf/concur/BouyerDMR09,DBLP:conf/atva/BouyerMOU11}, quantifying permissiveness based on the weight of transitions disabled by strategies. Moreover, the concept of \emph{weakest} strategies was considered in safety games with imperfect information~\cite{DBLP:conf/tacas/KuijperP09}, leading to a compositional control synthesis method for the weakest safety controllers under partial observation~\cite{DBLP:conf/concur/KuijperP09}. A related notion, termed the \emph{most general} strategy, was presented for compositional controller construction~\cite{DBLP:journals/acta/0001BK15}. The work proposed \emph{decision function templates}, which define all legal control choices for a given observable history. A most general controller is then synthesized using these templates along with a suitable fairness condition to ensure that legal choices are fairly chosen.

Recently, Anand et al.~established a novel concept of strategy templates that offer greater compositionality compared to previous approaches~\cite{DBLP:conf/tacas/AnandMNS23,DBLP:conf/cav/AnandNS23}. In the first work~\cite{DBLP:conf/tacas/AnandMNS23}, they introduced \emph{adequately permissive assumptions} on the opposing player, representing other distributed components. These assumptions, expressed as linear temporal logic (LTL) formulae over vertices and edges, are considered \emph{adequately permissive} if they allow all feasible cooperative system behaviors necessary to achieve the desired objective. This idea was later developed into \emph{permissive winning strategy templates} for deterministic zero-sum games in~\cite{DBLP:conf/cav/AnandNS23}. Their experimental results highlight two key applications of these strategy templates. First, when additional objectives arrive after a winning strategy has been computed, the strategy templates enable faster adaptation of the strategy compared to recomputing it from scratch. Second, the strategy templates facilitate fault-tolerant control by producing new strategies when certain actions become unavailable due to system faults at runtime. In essence, strategy templates effectively account for both evolving requirements and system changes.

The applications of strategy templates extend beyond permissive and resilient control. Recent research has broadened their scope in several directions. One approach involves leveraging strategy templates to solve infinite-state games~\cite{DBLP:conf/cav/SchmuckHDN24}. Another utilizes them as abstractions for synthesizing low-level continuous-time dynamical systems~\cite{DBLP:journals/OJCSYS/NayakERSJ23}. Additionally, studies such as~\cite{DBLP:conf/hybrid/AnandSN24,DBLP:conf/tacas/NayakS24} have explored their role in co-synthesis for multi-player games, where templates function as contracts or constraints between players. These findings highlight the practicality and versatility of strategy templates. For a comprehensive overview of their applications, see~\cite{DBLP:conf/atva/AnandNS24}.

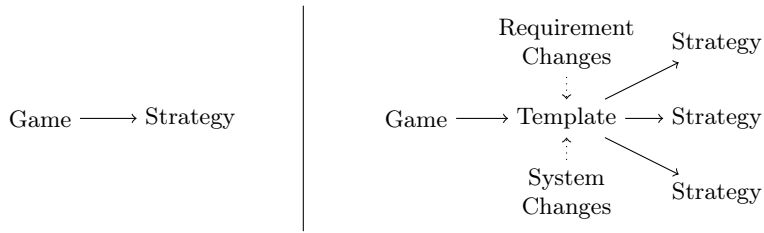
\begin{figure}[t]
\centering
\begin{tikzpicture}
    \node (g1) at (-1,0) {Game};
    \node (s1) at (2,0) {Strategy};
    \node (g2) at (5,0) {Game};
    \node (t2) at (8,0) {Template};
    \node (s21) at (11,1) {Strategy 1};
    \node (s22) at (11,0) {Strategy 2};
    \node (s23) at (11,-1) {Strategy 3};
    \node[align=center] (c1) at (5,1) {Requirement\\Changes};
    \node[align=center] (c2) at (5,-1) {System\\Changes};
    \draw[->] (g1) -- node[align=center,above]{\footnotesize construct} (s1);
    \draw[->] (g2) -- node[align=center,above]{\footnotesize construct} (t2);
    \draw[->] (t2) -- node[above,sloped]{\footnotesize extract} (s21);
    \draw[->] (t2) -- node[above]{\footnotesize extract} (s22);
    \draw[->] (t2) -- node[above,sloped]{\footnotesize extract} (s23);
    \draw[->, dashed] (c1) -- node[above]{\footnotesize modify} (8,1) -- (t2);
    \draw[->, dashed] (c2) -- node[below]{\footnotesize modify} (8,-1) -- (t2);
    \draw[-] (3.5,1.75) -- (3.5,-1.75);
\end{tikzpicture}
\caption{Left: A conventional winning strategy construction, giving one strategy. Right: An overview of a winning strategy construction utilizing a strategy template, allowing strategy adaptation for requirement and system changes.}\label{concept}
\end{figure}

\subsection{Contributions}
To the best of our knowledge, no permissive winning strategy templates have been proposed for stochastic games. In this work, we expand the concept of strategy templates from prior research, as depicted in Fig.~\ref{concept}, to incorporate stochastic games. Our key contributions are as follows.
\begin{enumerate}
    \item We present algorithms to construct strategy templates for \emph{almost-sure winning} criterion under five winning objectives of stochastic games (safety, reachability, Büchi, co-Büchi, and parity). \change{These objectives---often formulated in LTL---are widely studied in the formal methods and system control communities, as they capture fundamental classes of temporal system behaviors~\cite{DBLP:journals/arcras/BeltaS19,DBLP:books/daglib/0020348}.} To achieve these, we incorporate set operations of~\cite{DBLP:conf/tacas/BanerjeeMMSS22} and gadgets of~\cite{DBLP:conf/csl/ChatterjeeJH03}. The correctness proofs are provided. (Sect.~\ref{sec3})
    \item We develop an algorithm to construct strategy templates for \emph{positive winning} criterion under five winning objectives of stochastic games. As more information is required to compose positive winning strategy templates, we also introduce an algorithm to compose such templates. (Sect.~\ref{sec99})
    \item We discuss comparisons between templates based on their \emph{permissiveness} and \emph{sizes}. (Sect.~\ref{sec6})
    \item We propose a novel procedure to extract strategies from strategy templates which balances between the winning objective and the permissiveness. (Sect.~\ref{sec5})
\end{enumerate}
In addition, we redefine the concept of strategy templates and their permissiveness using sets of edges, LTL formulae, and formal languages. (Sect.~\ref{sec3} and~\ref{sec6})

\change{Preliminary results on almost-sure winning strategy templates appeared in an earlier version of this work~\cite{DBLP:conf/ictac/PhalakarnPH24}. This extended version presents the main novel contributions: the development of positive winning strategy templates and corresponding template composition algorithms. In addition, we offer further intuitions and detailed explanations for the algorithms used and proposed throughout the work.}

\section{Preliminaries}

\subsection{Linear Temporal Logic}
We briefly review \emph{linear temporal logic (LTL)}, which is later used to define winning objectives. We invite interested readers to see~\cite{DBLP:books/daglib/0020348} for a formal definition.

\begin{definition}[Linear Temporal Logic Formula]\label{defn:LTL}
\emph{LTL formulae} over the set $\AP$ of atomic propositions are formed by the following grammar, where $a \in \AP$.
$$\varphi ::= \textnormal{true} \mid a \mid \varphi_1 \wedge \varphi_2 \mid \neg\varphi \mid \XX \varphi \mid \varphi_1 \until \varphi_2$$
\end{definition}
\noindent The semantics of LTL over an infinite sequence $\bar{v} = v_0 v_1 \ldots \in \AP^{\,\omega}$ of atomic propositions is defined as follows.
\begin{alignat*}{2}
    \bar{v} &\vDash \textnormal{true}.\\
    \bar{v} &\vDash a
            &&\textnormal{\quad iff } v_0 = a, \textnormal{ for } a \in \AP.
    \\
    \bar{v} &\vDash \varphi_1 \wedge \varphi_2
            &&\textnormal{\quad iff } \bar{v} \vDash \varphi_1 \textnormal{ and } \bar{v} \vDash \varphi_2.
    \\
    \bar{v} &\vDash \neg \varphi
            &&\textnormal{\quad iff } \bar{v} \not\vDash \varphi.
    \\
    \bar{v} &\vDash \XX  \varphi
            &&\textnormal{\quad iff } v_1 v_2 \ldots \vDash \varphi.
    \\
    \bar{v} &\vDash \varphi_1 \until \varphi_2
            &&\textnormal{\quad iff } \exists i \geq 0, v_i v_{i+1} \ldots \vDash \varphi_2 \wedge (\forall j < i, v_j v_{j+1} \ldots \vDash \varphi_1).
\end{alignat*}
We say that $\bar{v}$ \emph{satisfies} an LTL formula $\varphi$ if and only if $\bar{v} \vDash \varphi$. Given $X \subseteq \AP$, we write $\bar{v} \vDash X$ to denote $\bar{v} \vDash \bigvee_{x \in X} x$ for notational convenience. Also, the temporal modalities \emph{eventually} and \emph{always} are defined by $\eventually\varphi := \textnormal{true} \until \varphi$ and $\always\varphi := \neg \eventually(\neg \varphi)$, respectively.

\subsection{Stochastic Games}\label{sec2.2}
The following definition is adapted from~\cite{DBLP:conf/csl/ChatterjeeJH03}.

\begin{definition}[Stochastic Game]
A \emph{stochastic game (SG)} is denoted by $G = (V,E,(V_\square,V_\bigcirc,V_\mathlarger{\triangle}))$ where $(V,E)$ is a finite directed graph and $(V_\square,V_\bigcirc,V_\mathlarger{\triangle})$ is a partition of $V$.
\end{definition}

\begin{figure}[t]
\centering
\begin{tikzpicture}[
    cir/.style={anchor=south, minimum size=0pt},
    box/.style={anchor=south, minimum size=25pt, regular polygon, regular polygon sides=4, inner sep=0pt},
    tri/.style={anchor=south, minimum size=23pt, regular polygon, regular polygon sides=3, inner sep=0pt}
]
    \node [state, box] at (  0,  0) (s00) {$\phantom{1}$};
    \node [state, cir] at (1.4,  0) (s01) {$\phantom{2}$};
    \node [state, tri] at (2.8,  0) (s02) {$\phantom{1}$};
    \node [state, tri] at (  0,1.2) (s10) {$\phantom{2}$};
    \node [state, box] at (1.4,1.2) (s11) {$\phantom{4}$};
    \node [state, cir] at (2.8,1.2) (s12) {$\phantom{3}$};
    \node [state, cir] at (  0,2.4) (s20) {$\phantom{0}$};
    \node [state, tri] at (1.4,2.4) (s21) {$\phantom{1}$};
    \node [state, box] at (2.8,2.4) (s22) {$\phantom{2}$};

    \path (s00) edge (s01);
    \path (s00) edge (s10);
    \path ($(s01.0)-(0,0.5ex)$) edge ($(s02.150)-(0.3ex,0.5ex)$);
    \path ($(s02.150)+(0.3ex,0.5ex)$) edge ($(s01.0)+(0,0.5ex)$);
    \path (s02) edge (s12);
    \path (s10) edge (s20);
    \path (s10.330) edge (s01);
    \path (s11.225) edge (s00.45);
    \path (s11.45) edge (s22.225);
    \path (s11) edge (s12);
    \path (s12) edge (s21.330);
    \path (s12) edge (s01);
    \path (s20) edge (s11.135);
    \path ($(s20.0)-(0,0.5ex)$) edge ($(s21.150)-(0.3ex,0.5ex)$);
    \path ($(s21.150)+(0.3ex,0.5ex)$) edge ($(s20.0)+(0,0.5ex)$);
    \path (s21) edge (s11);
    \path (s22) edge (s21.30);
    \path (s22) edge (s12);
\end{tikzpicture}\qquad\qquad\begin{tikzpicture}[
    cir/.style={anchor=south, minimum size=0pt},
    box/.style={anchor=south, minimum size=25pt, regular polygon, regular polygon sides=4, inner sep=0pt},
    tri/.style={anchor=south, minimum size=23pt, regular polygon, regular polygon sides=3, inner sep=0pt}
]
    \node [state, box] at (  0,  0) (s00) {$1$};
    \node [state, cir] at (1.4,  0) (s01) {$2$};
    \node [state, tri] at (2.8,  0) (s02) {$1$};
    \node [state, tri] at (  0,1.2) (s10) {$2$};
    \node [state, box] at (1.4,1.2) (s11) {$4$};
    \node [state, cir] at (2.8,1.2) (s12) {$3$};
    \node [state, cir] at (  0,2.4) (s20) {$0$};
    \node [state, tri] at (1.4,2.4) (s21) {$1$};
    \node [state, box] at (2.8,2.4) (s22) {$2$};

    \path (s00) edge (s01);
    \path (s00) edge (s10);
    \path ($(s01.0)-(0,0.5ex)$) edge ($(s02.150)-(0.3ex,0.5ex)$);
    \path ($(s02.150)+(0.3ex,0.5ex)$) edge ($(s01.0)+(0,0.5ex)$);
    \path (s02) edge (s12);
    \path (s10) edge (s20);
    \path (s10.330) edge (s01);
    \path (s11.225) edge (s00.45);
    \path (s11.45) edge (s22.225);
    \path (s11) edge (s12);
    \path (s12) edge (s21.330);
    \path (s12) edge (s01);
    \path (s20) edge (s11.135);
    \path ($(s20.0)-(0,0.5ex)$) edge ($(s21.150)-(0.3ex,0.5ex)$);
    \path ($(s21.150)+(0.3ex,0.5ex)$) edge ($(s20.0)+(0,0.5ex)$);
    \path (s21) edge (s11);
    \path (s22) edge (s21.30);
    \path (s22) edge (s12);
\end{tikzpicture}\qquad\qquad\begin{tikzpicture}[
    cir/.style={anchor=south, minimum size=0pt},
    box/.style={anchor=south, minimum size=25pt, regular polygon, regular polygon sides=4, inner sep=0pt},
    tri/.style={anchor=south, minimum size=23pt, regular polygon, regular polygon sides=3, inner sep=0pt}
]
    \node [state, box] at (  0,  0) (s00) {$1$};
    \node [state, cir, dotted] at (1.4,  0) (s01) {$2$};
    \node [state, tri, dotted] at (2.8,  0) (s02) {$1$};
    \node [state, tri] at (  0,1.2) (s10) {$2$};
    \node [state, box, ultra thick] at (1.4,1.2) (s11) {$4$};
    \node [state, cir, dotted] at (2.8,1.2) (s12) {$3$};
    \node [state, cir, ultra thick] at (  0,2.4) (s20) {$0$};
    \node [state, tri, ultra thick] at (1.4,2.4) (s21) {$1$};
    \node [state, box, ultra thick] at (2.8,2.4) (s22) {$2$};

    \path (s00) edge (s10);
    \path ($(s01.0)-(0,0.5ex)$) edge ($(s02.150)-(0.3ex,0.5ex)$);
    \path ($(s02.150)+(0.3ex,0.5ex)$) edge ($(s01.0)+(0,0.5ex)$);
    \path (s02) edge (s12);
    \path (s10) edge (s20);
    \path (s10.330) edge (s01);
    \path (s11.45) edge (s22.225);
    \path (s12) edge (s01);
    \path ($(s20.0)-(0,0.5ex)$) edge ($(s21.150)-(0.3ex,0.5ex)$);
    \path ($(s21.150)+(0.3ex,0.5ex)$) edge ($(s20.0)+(0,0.5ex)$);
    \path (s21) edge (s11);
    \path (s22) edge (s21.30);
\end{tikzpicture}
\caption{Left: An example of a stochastic game. Middle: The same stochastic game with a priority function. Right: An example of strategies for players Even and Odd, and the winning sets of players Even (thick vertices) and Odd (dotted vertices) for the parity objective.}\label{sg}
\end{figure}

The game consists of three players: Even ($\square$), Odd ($\bigcirc$), and Random ($\mathlarger{\mathlarger{\triangle}}$). They take turns moving a token from vertex to vertex, forming a path. At a vertex in $V_\square$ (resp. $V_\bigcirc$), player Even (resp. Odd) moves the token to one of its successors. When the token is at a vertex in $V_\mathlarger{\triangle}$, player Random moves the token to one of its successors uniformly at random. We assume that there always exists at least one out-going edge at each vertex, implying that any path in the game can always be extended to an infinite path. An example of a stochastic game is illustrated in Fig.~\ref{sg}. Let $\mathcal{D}(V)$ denote the set of probability distributions on $V$. \emph{Strategies} for players Even and Odd are defined as follows.

\begin{definition}[Strategy]
A \emph{strategy for player Even} is $\sigma_\square : V^* \times V_\square \to \mathcal{D}(V)$ describing its next move. A \emph{strategy for player Odd} is $\sigma_\bigcirc : V^* \times V_\bigcirc \to \mathcal{D}(V)$.
\end{definition}

Intuitively, a strategy assigns the probability for a player to move to a successor vertex based on the path of previously visited vertices. Given a measurable set of infinite paths $P \subseteq V^\omega$, an initial vertex $v_0$, and a pair $(\sigma_\square,\sigma_\bigcirc)$ of strategies, the probability that an infinite path generated under $(\sigma_\square,\sigma_\bigcirc)$ belongs to $P$ is uniquely defined. We write $\Pr^{\sigma_\square,\sigma_\bigcirc}_{v_0}[P]$ for the probability that an infinite path belongs to $P$ if the game starts at $v_0$ and the players' strategies are $\sigma_\square$ and $\sigma_\bigcirc$.

We specify \emph{winning objectives} of the game using LTL formulae where atomic propositions are vertices (i.e., the set $\AP$ in Def.~\ref{defn:LTL} is $V$). For notational convenience, we write $\Pr^{\sigma_\square,\sigma_\bigcirc}_{v_0}[\varphi]$ for $\Pr^{\sigma_\square,\sigma_\bigcirc}_{v_0}[P_\varphi]$ where $P_\varphi = \{\bar{v} \in V^\omega : \bar{v} \vDash \varphi\}$. Given $X \subseteq V$, we focus on five winning objectives: \emph{safety} $\always X$ means a path always stays in $X$, \emph{reachability} $\eventually X$ means a path eventually reaches $X$, \emph{Büchi} $\buchi X$ means a path visits $X$ infinitely often, \emph{co-Büchi} $\cobuchi X$ means a path eventually stays in $X$, and \emph{parity}. For a parity objective, we are given a \emph{priority function} $p : V \to \{0, \ldots, d\}$ for some $d \in \mathbb{N}$. Let $V_i := \{v \in V : p(v) = i\}$. Then, the parity objective is $\bigwedge_{i \in \{1,3,\ldots,2\cdot\lceil d/2 \rceil-1\}} \left( \buchi V_i \implies \bigvee_{j \in \{0,2,\ldots,i-1\}} \buchi V_j \right)$. In other words, an infinite path satisfies the parity objective if the minimum priority seen infinitely often along the path is even. Figure~\ref{sg}(middle) shows an instance of a stochastic game with a parity objective, where the priority of each vertex is written inside that vertex.

Consider a winning objective $\varphi$, we say that a strategy $\sigma_\square$ of player Even is \emph{almost-sure winning} from $v_0$ if for all strategies $\sigma_\bigcirc$ of player Odd, we have $\Pr^{\sigma_\square,\sigma_\bigcirc}_{v_0}[\varphi] = 1$. A strategy $\sigma_\square$ of player Even is \emph{positive winning} from $v_0$ if for all strategies $\sigma_\bigcirc$ of player Odd, we have $\Pr^{\sigma_\square,\sigma_\bigcirc}_{v_0}[\varphi] > 0$. We denote $W_\square \subseteq V$, called the \emph{winning set of player Even}, for the set of vertices from which there exists an almost-sure winning strategy for player Even. Similarly, we denote $W_\bigcirc \subseteq V$, called the \emph{winning set of player Odd}, for the set of vertices from which there does not exist a positive winning strategy for player Even. Then, we are interested in the following problem.

\begin{definition}[Winning Strategy Computation]
Given an SG $G$ and a winning objective $\varphi$, the \emph{almost-sure winning strategy computation problem} is to compute a strategy $\sigma_\square$ of player Even that is almost-sure winning from all $v \in W_\square$. Analogously, the \emph{positive winning strategy computation problem} is to compute a strategy $\sigma_\square$ of player Even that is positive winning from all $v \in V \setminus W_\bigcirc$.
\end{definition}

Figure~\ref{sg}(right) provides an example of the winning sets $W_\square$ and $W_\bigcirc$ for the parity objective, together with an almost-sure and positive winning strategy $\sigma_\square$ of player Even. We note that, under the strategy $\sigma_\square$, an infinite path starting from any $v \in W_\square$ visits the vertex with priority $0$ infinitely often with probability $1$.

\subsection{Set Operators}
Using $\mu$-calculus, $\mu Y.f(Y)$ and $\nu Y.f(Y)$ denote the least and greatest fixed points of a function $f : 2^V \to 2^V$. They can be computed via Kleene's fixed point theorem (see e.g.,~\cite{DBLP:journals/dm/Baranga91}). For $X \subseteq V$, we define the following set operators.
\begin{itemize}
    \item $\textsc{Pre}(X) := \{u \in V : \forall v \in V, (u,v) \in E \implies v \in X\}$
    \item $\textsc{Pre}_\square(X) := \{u \in V_\square : \exists v \in V, (u,v) \in E \wedge v \in X\}$
    \item $\textsc{Pre}_\bigcirc(X) := \{u \in V_\bigcirc : \exists v \in V, (u,v) \in E \wedge v \in X\}$
    \item $\textsc{Attr}(X) := \mu Y.(X \cup \textsc{Pre}(Y))$
    \item $\textsc{Attr}_\square(X) := \mu Y.(X \cup \textsc{Pre}(Y) \cup \textsc{Pre}_\square(Y))$
    \item $\textsc{Attr}_\bigcirc(X) := \mu Y.(X \cup \textsc{Pre}(Y) \cup \textsc{Pre}_\bigcirc(Y))$
\end{itemize}

\noindent In brief, $\textsc{Pre}(X)$ contains vertices that must reach $X$ in one step, and $\textsc{Pre}_\square(X)$ (resp. $\textsc{Pre}_\bigcirc(X)$) contains player Even's (resp. player Odd's) vertices that can reach $X$ in one step. The $\textsc{Attr}$ operators are defined similarly but for reaching $X$ in finitely many steps. Additionally, we define more set operators inspired by Banerjee et al.~\cite{DBLP:conf/tacas/BanerjeeMMSS22}, where $X,X' \subseteq V$.
\begin{itemize}
    \item $\textsc{Pre}_\mathlarger{\triangle}(X',X) := \{ u \in V_\mathlarger{\triangle} : (\forall v \in V, (u,v) \in E \implies v \in X') \wedge (\exists v \in V, (u,v) \in E \wedge v \in X) \}$
    \item $\textsc{Attr}'(X) := \nu Z.\mu Y (X \cup \textsc{Pre}(Y) \cup \textsc{Pre}_\mathlarger{\triangle}(Z,Y))$
    \item $\textsc{Attr}_\square'(X) := \nu Z.\mu Y (X \cup \textsc{Pre}_\square(Y) \cup \textsc{Pre}(Y) \cup \textsc{Pre}_\mathlarger{\triangle}(Z,Y))$
    \item $\textsc{Attr}_\bigcirc'(X) := \nu Z.\mu Y (X \cup \textsc{Pre}_\bigcirc(Y) \cup \textsc{Pre}(Y) \cup \textsc{Pre}_\mathlarger{\triangle}(Z,Y))$
\end{itemize}
The set $\textsc{Pre}_\mathlarger{\triangle}(X',X)$ consists of player Random's vertices whose all edges lead to $X'$ and some edges lead to $X$. The operators $\textsc{Attr}'$, $\textsc{Attr}'_\square$, and $\textsc{Attr}'_\bigcirc$ are defined analogously to $\textsc{Attr}$, $\textsc{Attr}_\square$, and $\textsc{Attr}_\bigcirc$ respectively, accounting for player Random's vertices. With these set operators, we state below the result from~\cite{DBLP:conf/tacas/BanerjeeMMSS22}.

\begin{theorem}[{\cite[Thm.~3--4]{DBLP:conf/tacas/BanerjeeMMSS22}}]\label{thm:banerjee}
Given an SG $G = (V,E,(V_\square,V_\bigcirc,V_\mathlarger{\triangle}))$ and $X \subseteq V$. The set $\textnormal{\textsc{Attr}}'_\square(X)$ is the winning set of player Even for $\eventually X$. Furthermore, the set $\nu Z.\mu Y((X \cap \textnormal{\textsc{Pre}}_\square(Z) \cap \textnormal{\textsc{Pre}}(Z)) \cup \textnormal{\textsc{Pre}}_\square(Y) \cup \textnormal{\textsc{Pre}}(Y) \cup \textnormal{\textsc{Pre}}_\mathlarger{\triangle}(Z,Y))$ is the winning set of player Even for $\buchi X$.
\end{theorem}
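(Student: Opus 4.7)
The plan is to establish each equality by the standard two-way inclusion: soundness, namely that every vertex in the stated fixed point is almost-sure winning, via an explicit strategy construction for player Even; and completeness, namely that every vertex outside is not almost-sure winning, via a spoiling strategy for player Odd obtained by dualising the fixed-point computation on the complement.

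For the reachability claim, I would exploit the inner $\mu$ to assign each $v \in \textsc{Attr}'_\square(X)$ a finite rank $r(v) \in \mathbb{N}$ equal to the Kleene stage at which $v$ first appears when $Z$ is set to the outer fixed point $\textsc{Attr}'_\square(X)$. The induced strategy $\sigma_\square$ picks, at each Even vertex of positive rank lying in $\textsc{Pre}_\square(Y)$, an edge to a strictly smaller-rank successor; at Odd vertices in $\textsc{Pre}(Y)$ every successor already has smaller rank; and at Random vertices in $\textsc{Pre}_\triangle(Z,Y)$ all successors remain in $Z$ while at least one has strictly smaller rank. The outer $\nu$ therefore guarantees that the play never leaves $Z$, and from any vertex of positive rank the probability of strictly decreasing the rank in a single step is bounded below by a constant $p>0$ (the minimum nonzero Random transition probability, and $1$ at non-Random vertices). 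A standard Borel--Cantelli argument then yields $\Pr[\eventually X]=1$. Completeness follows because $V \setminus \textsc{Attr}'_\square(X)$ is a trap in which Odd has a memoryless strategy avoiding $X$ with positive probability, read off from the complement characterisation of the fixed point.

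For the B\"uchi claim I would bootstrap the reachability argument inside the outer $\nu$. Writing $W$ for the stated fixed-point set and $X^{\ast} := X \cap \textsc{Pre}_\square(W) \cap \textsc{Pre}(W)$, the inner $\mu Y$ is exactly the reachability fixed point of the first part carried out inside $W$ with target $X^{\ast}$; thus Even has a strategy that almost-surely reaches $X^{\ast}$ from any vertex of $W$ while remaining in $W$. From an $X^{\ast}$ vertex all successors lie in $W$ by definition, so after the visit Even restarts the reachability phase from a vertex of $W$. Iterating, the play visits $X$ infinitely often with probability $1$. Completeness again follows from a dual trapping argument on $V \setminus W$ showing that Odd can force only finitely many visits to $X$ with positive probability.

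I expect the main technical obstacle to be the B\"uchi soundness step: composing infinitely many almost-sure reachability phases requires ruling out a degradation of success probability across phases. I plan to handle this via the strong Markov property combined with the uniform per-step lower bound $p$ from the reachability analysis, so that each successive return to $X$ has almost-surely finite hitting time and therefore $X$ is revisited infinitely often with probability $1$ rather than merely with positive probability at each round.
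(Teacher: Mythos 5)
You should first be aware that the paper does not prove this statement at all: it is imported verbatim from Banerjee et al.\ (Theorems~3--4 of the cited work), so there is no in-paper argument to compare against. Your sketch is essentially the standard direct proof of that cited result---a rank function read off the Kleene stages of the inner $\mu$, a positive-probability-of-progress argument for soundness, and a dual fixed-point/trap argument for completeness---so as a plan it is the right one and would make the result self-contained where the paper leaves it as a citation. One step needs tightening, however: the claim that a per-step lower bound $p>0$ on the probability of a rank decrease ``then yields $\Pr[\eventually X]=1$ by Borel--Cantelli'' does not close the argument as stated. At a vertex admitted through $\textnormal{\textsc{Pre}}_\triangle(Z,Y)$ only \emph{some} successor has smaller rank; the remaining successors stay in $Z$ but may have \emph{larger} rank, so the rank is not a supermartingale, and ``the rank decreases infinitely often'' is compatible with it also increasing infinitely often and never reaching $0$. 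The standard repair is to argue over blocks of $|V|$ steps: on the event that every Random move in a block follows a rank-decreasing edge (probability at least $p^{|V|}$), the rank strictly decreases at every step---Even picks a smaller-rank successor, and a non-target Odd vertex enters the fixed point only through $\textnormal{\textsc{Pre}}(Y)$, so all of its successors have smaller rank---hence hits $0$ within the block; since the outer $\nu$ confines the play to $Z$ forever, the probability of avoiding $X$ for $k$ blocks is at most $(1-p^{|V|})^k\to 0$. With that adjustment your reachability soundness goes through, and your B\"uchi bootstrap is fine: each reachability phase succeeds with conditional probability exactly $1$, so the countable intersection over phases is almost sure and no ``degradation across phases'' can occur. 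The completeness halves are only gestured at, but the dual $\mu\nu$ trap construction you name is the correct tool there.
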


\subsection{Solving Stochastic Parity Games}
The winning set of player Even for stochastic parity games can be compute\change{d} by reducing the games into deterministic parity games (i.e., no player Random) and then applying existing techniques for deterministic parity games.

The reduction from stochastic parity games to deterministic parity games was proposed by Chatterjee et al.~\cite{DBLP:conf/csl/ChatterjeeJH03}. Briefly, the reduction described in Alg.~\ref{algo_gadget} replaces each vertex of player Random with a \emph{gadget}, which is a directed graph consisting of three layers of vertices (Fig.~\ref{gadget}), \change{while preserving the game structure for vertices of players Even and Odd}. It was proved in~\cite{DBLP:conf/csl/ChatterjeeJH03} that if player Even wins at a vertex in the (reduced) deterministic parity game, then player Even almost-sure wins at the corresponding vertex in the stochastic parity game. Notice that, for vertices of the deterministic parity game where player Odd wins, player Even may or may not positive win at the corresponding vertex in the stochastic parity game.

\begin{algorithm}[t]
\DontPrintSemicolon
\SetAlgoNoLine
$\textsc{Reduce}(G = (V,E,(V_\square,V_\bigcirc,V_\mathlarger{\triangle})), p : V \to \{0, \ldots, d\})$\\ \Indp
    $V'_\square \gets \emptyset; V'_\bigcirc \gets \emptyset; E' \gets \emptyset$\\
    \lForEach{$v \in V_\square$}{$V'_\square \gets V'_\square \cup \{v'\}; p'(v') \gets p(v)$}
    \lForEach{$v \in V_\bigcirc$}{$V'_\bigcirc \gets V'_\bigcirc \cup \{v'\}; p'(v') \gets p(v)$}
    \ForEach{$v \in V_\mathlarger{\triangle}$}{
        $V'_\bigcirc \gets V'_\bigcirc \cup \{v'\}; p'(v') \gets p(v)$\\
        \lFor{$i \in \{0, \ldots, \lceil p(v)/2 \rceil\}$}{$V'_\square \gets V'_\square \cup \{v_i'\}; p'(v_i') \gets p(v); E' \gets E' \cup \{(v',v_i')\}$
        }
        \For{$j \in \{0, \ldots, p(v)\}$}{
            \lIf{\upshape $j$ is even}{$V'_\bigcirc \gets V'_\bigcirc \cup \{v_{\lceil j/2 \rceil,j}'\}$ \textbf{else} $V'_\square \gets V'_\square \cup \{v_{\lceil j/2 \rceil,j}'\}$}
            $p'(v_{\lceil j/2 \rceil,j}') \gets j; E' \gets E' \cup \{(v_{\lceil j/2 \rceil}',v_{\lceil j/2 \rceil,j}')\}$
        }
    }
    \ForEach{$(u,v) \in E$}{
        \lIf{$u \in V_\mathlarger{\triangle}$}{\textbf{for} $j \in \{0, \ldots, p(u)\}$ \textbf{do} $E' \gets E' \cup \{(u_{\lceil j/2 \rceil,j}',v')\}$}
        \lElse{$E' \gets E' \cup \{(u',v')\}$}
    }
    \Return $(G' = (V' = V'_\square \cup V'_\bigcirc, E', (V'_\square,V'_\bigcirc,\emptyset)),p')$
\caption{Reducing stochastic to deterministic parity games~\cite{DBLP:conf/csl/ChatterjeeJH03}.}\label{algo_gadget}
\end{algorithm}

\begin{figure}[t]
\centering
\begin{tikzpicture}[
    cir/.style={minimum size=0pt},
    box/.style={minimum size=25pt, regular polygon, regular polygon sides=4, inner sep=0pt}
]
    \node [state, cir, label=below:$v'_{0,0}$] at (   0, 0) (v00) {$0$};
    \node [state, box, label=below:$v'_{1,1}$] at ( 1.5, 0) (v11) {$1$};
    \node [state, cir, label=below:$v'_{1,2}$] at ( 2.5, 0) (v12) {$2$};
    \node [state, box, label=below:$v'_{2,3}$] at ( 3.5, 0) (v23) {$3$};
    \node [state, cir, label=below:$v'_{2,4}$] at ( 4.5, 0) (v24) {$4$};
    \node              at (5.75, 0) {$\cdots$};
    \node [state, box, label=below:$v'_{\lceil p(v)/2 \rceil,p(v)}$] at (   7, 0) (v99) {$\phantom{0}$};
    \node              at (   7, 0) {\small$p(v)$};

    \node [state, box, label=right:$v'_{0}$] at (  0, 1.2) (v0) {$\phantom{0}$};
    \node              at (  0, 1.2) {\small$p(v)$};
    \node [state, box, label=right:$v'_{1}$] at (  2, 1.2) (v1) {$\phantom{0}$};
    \node              at (  2, 1.2) {\small$p(v)$};
    \node [state, box, label=right:$v'_{2}$] at (  4, 1.2) (v2) {$\phantom{0}$};
    \node              at (  4, 1.2) {\small$p(v)$};
    \node              at (5.5, 1.2) {$\cdots$};
    \node [state, box, label=right:$v'_{\lceil p(v)/2 \rceil}$] at (  7, 1.2) (v9) {$\phantom{0}$};
    \node              at (  7, 1.2) {\small$p(v)$};

    \node [state, cir, label=right:$v'$] at (3.5, 2.4) (v) {$\phantom{0}$};
    \node              at (3.5, 2.4) {\small$p(v)$};

    \path (v.270) edge (v0.45);
    \path (v.270) edge (v1.45);
    \path (v.270) edge (v2);
    \path (v.270) edge (v9.135);
    \path (v0) edge (v00);
    \path (v1.270) edge (v11);
    \path (v1.270) edge (v12);
    \path (v2.270) edge (v23);
    \path (v2.270) edge (v24);
    \path (v9) edge (v99);
\end{tikzpicture}
\caption{Gadget of~\cite{DBLP:conf/csl/ChatterjeeJH03} for reducing stochastic parity games to deterministic parity games.}\label{gadget}
\end{figure}
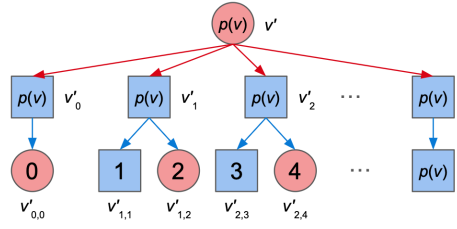

\begin{lemma}[{\cite[Lem. 3]{DBLP:conf/csl/ChatterjeeJH03}}]\label{lem:chatterjee}
Let $(G',p') \gets \textnormal{\textsc{Reduce}}(G,p)$. For every vertex $v$ in $G$, if player Even has a winning strategy from $v'$ in $G'$, then player Even has an almost-sure winning strategy from $v$.
\end{lemma}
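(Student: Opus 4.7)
The plan is to start from a positional winning strategy $\sigma'_\square$ for player Even in $G'$ (which exists by positional determinacy of deterministic parity games) and lift it to an almost-sure winning strategy $\sigma_\square$ in $G$. Since each $u \in V_\square$ is copied as $u' \in V'_\square$ whose $G'$-successors correspond one-to-one with the $G$-successors of $u$, I would define $\sigma_\square(u)$ to be the successor selected by $\sigma'_\square(u')$; this is well-defined for every $u \in V_\square$. I would then fix an arbitrary adversary strategy $\sigma_\bigcirc$ in $G$, form the induced Markov chain $M$ on $V$, and invoke the standard fact that almost every $M$-path starting from $v$ eventually enters and stays in some bottom strongly connected component (BSCC). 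Thus the problem reduces to showing that every BSCC $C$ of $M$ reachable from $v$ has even minimum priority.

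For the main step I would argue by contradiction: assume some reachable BSCC $C$ has odd minimum priority $k$, witnessed by $u^* \in C$ with $p(u^*) = k$. I would then construct a strategy $\tau'_\bigcirc$ for player Odd in $G'$ exhibiting a play losing for Even from some $u' \in W'_\square$ (the winning region of Even in $G'$), which is the desired contradiction. The key move of $\tau'_\bigcirc$ is inside each random gadget $v' \in V'_\bigcirc$ for $v \in V_\triangle \cap C$: player Odd selects the branch $v'_i$ with $i = \lceil k/2 \rceil = (k+1)/2$, a legal choice because $p(v) \geq k$ implies $\lceil p(v)/2 \rceil \geq (k+1)/2$. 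This forces the Even-owned $v'_i$ to pick some $v'_{i,j}$ with $j \in \{k,k+1\}$, both $\geq k$; and when $v = u^*$ the gadget leaves only $j = k$ available, so priority $k$ is visited on every pass through $u^*$'s gadget. At Odd-controlled non-gadget vertices and at third-layer gadget vertices owned by Odd, $\tau'_\bigcirc$ picks any successor lifting back to a vertex in $C$; such a successor always exists because $C$ is $M$-closed, which in particular forces every $G$-successor of a random vertex in $C$ to lie in $C$. The resulting play stays in the lift of $C$, sees priority $k$ infinitely often (since $u^*$ is visited infinitely often in the BSCC), and sees no priority $< k$ (non-random vertices of $C$ have $G$-priority $\geq k$ by minimality, and gadget priorities chosen under $\tau'_\bigcirc$ are $\geq k$). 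Hence player Odd wins in $G'$ from $u'$, contradicting $u' \in W'_\square$.

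The principal difficulties I foresee are twofold. First, the gadget arithmetic must be verified with care: that $i = (k+1)/2$ is always in range, that Even's forced choice at $v'_i$ cannot yield a priority smaller than $k$, and that the third-layer ownership together with successor structure allows $\tau'_\bigcirc$ to keep the play in the lift of $C$ without accidentally revealing a smaller priority on the way. Second, one must justify that some $u' \in C' := \{u' : u \in C\}$ actually lies in $W'_\square$: this follows from lifting a finite positive-probability $M$-prefix from $v$ to $C$ into a finite $G'$-play from $v'$ consistent with $\sigma'_\square$ (using Odd's freedom inside random gadgets to match successors), combined with the standard fact that, under a positional winning strategy, every vertex reached along any play against any Odd strategy remains inside the winning region.
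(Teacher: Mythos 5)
The paper itself does not prove this lemma---it is imported verbatim from Chatterjee, Jurdzi\'nski, and Henzinger with a citation---so there is no in-paper argument to compare against; I am judging your sketch on its own terms. The overall shape (lift a positional winning strategy from $G'$, fix the adversary, decompose the induced chain into BSCCs, derive a contradiction in $G'$ from a bad BSCC) is reasonable, and your gadget arithmetic (Odd proposes $i=\lceil k/2\rceil$, Even is confined to $j\in\{k,k+1\}$, only $j=k$ survives at $u^*$) is correct. But there are two genuine gaps. The first: ``form the induced Markov chain $M$ on $V$'' is only meaningful when $\sigma_\bigcirc$ is positional; for an arbitrary history-dependent adversary the induced process lives on the history space and the finite BSCC decomposition you invoke does not exist. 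You must first reduce to positional adversaries (fix $\sigma_\square$, view the result as an MDP for Odd, and cite positional optimality for parity objectives in MDPs); this is standard but it is a real step, and everything downstream---closure of $C$, the prefix-lifting, and the steering below---silently depends on it.

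The second gap is the crux. You claim the constructed play in $G'$ ``sees priority $k$ infinitely often (since $u^*$ is visited infinitely often in the BSCC).'' This conflates the almost-sure behaviour of the random chain $M$ with a single deterministic play of $G'$: in $G'$ the randomness at $V_\triangle$ has been replaced by player choices, and the play you build need never return to $u^*$. Concretely, if Even always answers $j=k+1$ inside the gadgets and your $\tau'_\bigcirc$ merely picks ``any successor lifting back to $C$'' at the Odd-owned third-layer vertices, the play can settle into a cycle of $C$ that avoids every priority-$k$ vertex; its minimum priority seen infinitely often is then the even value $k+1$, Even wins the play, and no contradiction is obtained (take $C=\{a,b,u^*\}$ with $a$ random, successors $b$ and $u^*$, $p(a)=p(b)=k+1$, $p(u^*)=k$, and Odd always exiting $a$'s gadget to $b$). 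The repair is precisely the point of the gadget and is missing from your sketch: whenever Even answers the even priority $k+1$, Odd owns the exit and must \emph{steer} along a fixed shortest $M$-path toward $u^*$ inside $C$, and one argues via this ranking that either Even answers $j=k$ infinitely often or the distance to $u^*$ decreases to zero, so priority $k$ is seen infinitely often in either case. A cousin of the same issue affects your last step (finding some $u'\in C'$ in $W'_\square$): when the lifted prefix traverses a gadget in which Even answers an odd $j$, Even---not you---chooses the exit, so the lifted play need not follow your chosen $M$-prefix into $C$, and closure of $\{u : u'\in W'_\square\}$ under \emph{all} random transitions is not immediate from closure of $W'_\square$ under Odd's moves alone.
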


To solve deterministic parity games, various algorithms can be used~\cite{DBLP:conf/stoc/CaludeJKL017,DBLP:journals/tcs/Zielonka98}. In this work, we mainly consider the recursive algorithm of Zielonka~\cite{DBLP:journals/tcs/Zielonka98} shown in Alg.~\ref{algo_zielonka}. The algorithm returns the winning sets of both players $(W_\square,W_\bigcirc)$. For $G = (V,E,(V_\square,V_\bigcirc,V_\mathlarger{\triangle}))$ and $X \subseteq V$, we use $G \setminus X$ as a shorthand for the subgame $(V \setminus X,E \setminus (X \times V \cup V \times X),(V_\square \setminus X,V_\bigcirc \setminus X,V_\mathlarger{\triangle} \setminus X))$.

\begin{algorithm}[t]
\DontPrintSemicolon
\SetAlgoNoLine
$\textsc{Solve}(G = (V,E,(V_\square,V_\bigcirc,\emptyset)), p : V \to \{0, \ldots, d\})$\\ \Indp
    \lIf{$V = \emptyset$}{\Return $(W_\square,W_\bigcirc) = (\emptyset,\emptyset)$}
    $x \gets \min \{p(v) : v \in V\}; X \gets \arg\min \{p(v) : v \in V\}$\\
    \If{\upshape $x$ is even}{
        $A \gets \textsc{Attr}_\square(X)$\\
        $(W'_\square,W'_\bigcirc) \gets \textsc{Solve}(G \setminus A, p)$\\
        \lIf{$W'_\bigcirc = \emptyset$}{\Return $(W_\square,W_\bigcirc) = (V,\emptyset)$}
        $B \gets \textsc{Attr}_\bigcirc(W'_\bigcirc)$\\
        $(W''_\square,W''_\bigcirc) \gets \textsc{Solve}(G \setminus B, p)$\\
        \Return $(W_\square,W_\bigcirc) = (W''_\square,W''_\bigcirc \cup B)$
    }
    \Else{
        $A \gets \textsc{Attr}_\bigcirc(X)$\\
        $(W'_\square,W'_\bigcirc) \gets \textsc{Solve}(G \setminus A, p)$\\
        \lIf{$W'_\square = \emptyset$}{\Return $(W_\square,W_\bigcirc) = (\emptyset,V)$}
        $B \gets \textsc{Attr}_\square(W'_\square)$\\
        $(W''_\square,W''_\bigcirc) \gets \textsc{Solve}(G \setminus B, p)$\\
        \Return $(W_\square,W_\bigcirc) = (W''_\square \cup B,W''_\bigcirc)$
    }
\caption{Solving deterministic parity games~\cite{DBLP:journals/tcs/Zielonka98}.}\label{algo_zielonka}
\end{algorithm}

\change{Zielonka's algorithm (Alg.~\ref{algo_zielonka}) follows a divide-and-conquer approach. It begins by identifying the set $X$ of vertices with the minimum priority. We assume that this minimum priority is even (the case where it is odd is handled analogously). The algorithm then computes the set $A$ from which player Even can force the play to reach $X$, and recursively solves the subgame $G \setminus A$. Let the resulting winning sets be $(W'_\square,W'_\bigcirc)$.

If $W'_\bigcirc = \emptyset$, then player Odd cannot win in $G$: any infinite path either eventually stays within $G \setminus A$ (where player Even wins), or visits $A$ infinitely often (ensuring the minimum priority---even---appears infinitely often).

Otherwise, the algorithm computes the set $B$ from which player Odd can force the play to reach $W'_\bigcirc$, and recursively solves the subgame $G \setminus B$. Let the resulting winning sets be $(W''_\square,W''_\bigcirc)$. In this case, player Even can control the play starting from $W''_\square$ to remain within $W''_\square$, avoiding both $W''_\bigcirc$ and $B$ (by the definition of $\textsc{Attr}_\bigcirc(W'_\bigcirc)$). Therefore, the winning set of player Even in $G$ is precisely $W''_\square$.}

\section{Almost-Sure Winning Strategy Templates}\label{sec3}
The concept of \emph{strategy templates} considered in this work was introduced in~\cite{DBLP:conf/tacas/AnandMNS23,DBLP:conf/cav/AnandNS23}. In this section, we redefine \emph{strategy templates} for our setting of stochastic games.

\begin{definition}[Strategy Template]\label{def7}
Given an SG $G = (V,E,(V_\square,V_\bigcirc,V_\mathlarger{\triangle}))$ and let $E_\square := E \cap (V_\square \times V)$, a \emph{strategy template} is $T = (P, \LL, C)$ where $P \subseteq E_\square$ is a set of \emph{prohibited edges}, $\LL \subseteq 2^{E_\square}$ is a set of \emph{live-groups}, and $C \subseteq E_\square$ is a set of \emph{co-live edges}.
\end{definition}

\begin{definition}[LTL Formula induced from Template]\label{def8}
Given a strategy template $T = (P, \LL, C)$, we define four LTL formulae induced from $T$ as follows.
\begin{itemize}
    \item $\psi_P := \bigwedge_{(u,v) \in P} \always(u \implies \neg \XX v)$,
    \item $\psi_\LL := \bigwedge_{L \in \LL}\left(\left(\bigvee_{(u,v) \in L}\buchi u\right) \implies \left(\bigvee_{(u,v) \in L}\buchi(u \wedge \XX v)\right)\right)$,
    \item $\psi_C := \bigwedge_{(u,v) \in C} \cobuchi(u \implies \neg\XX v)$,
    \item $\psi_T := \psi_P \wedge \psi_\LL \wedge \psi_C$.
\end{itemize}
\end{definition}

In brief, a strategy template $T = (P, \LL, C)$ describes a set of infinite paths with certain properties. Namely, infinite paths satisfying $\psi_P$ do not use edges in $P$; those satisfying $\psi_\LL$ have a property that: for each $L \in \LL$, if there is a vertex $u$ such that $(u,v) \in L$ and $u$ is visited infinitely often, then an edge in $L$ is used infinitely often; and those satisfying $\psi_C$ use edges in $C$ only finitely often.

Based on Def.~\ref{def7} and~\ref{def8}, we define \emph{almost-sure winning strategy templates} for SGs as follows. Recall that $W_\square \subseteq V$ denotes the winning set of player Even for the winning objective $\varphi$.

\begin{definition}[Almost-Sure Winning Strategy Template]\label{def:as_tem}
Given an SG $G$ and a winning objective $\varphi$, a strategy template $T = (P, \LL, C)$ is \emph{almost-sure winning for $\varphi$} if $\Pr^{\sigma_\square,\sigma_\bigcirc}_{v_0}[\psi_T] = 1 \implies \Pr^{\sigma_\square,\sigma_\bigcirc}_{v_0}[\varphi] = 1$, for any $v_0 \in W_\square$ and any pair of strategies $(\sigma_\square,\sigma_\bigcirc)$.
\end{definition}

It follows directly from Def.~\ref{def:as_tem} that a strategy $\sigma_\square$ of player Even is almost-sure winning if, under any strategy $\sigma_\bigcirc$ of player Odd, all generated paths from $W_\square$ satisfy $\psi_T$ of an almost-sure winning template $T$ with probability 1.

We now present algorithms to construct almost-sure winning strategy templates for five winning objectives---safety, reachability, Büchi, co-Büchi, and parity---and prove their correctness. For brevity, almost-sure winning strategy templates in this section are referred to by \emph{templates}. We let $\textsc{Edges}_\square(X,Y) := \{(u,v) \in E : u \in X \cap V_\square \wedge v \in Y\}$.

\subsection{Templates for Safety Objectives}
A safety objective is of the form $\always X$ where $X \subseteq V$. Algorithm~\ref{algo_safety} first computes $W_\square$ and then returns $T = (P,\emptyset,\emptyset)$. Since player Even must not leave $W_\square$, the set $P$ contains all player Even's edges that leave $W_\square$.

\begin{algorithm}[t]
\DontPrintSemicolon
\SetAlgoNoLine
$\textsc{SafetyTemplate}(G = (V,E,(V_\square,V_\bigcirc,V_\mathlarger{\triangle})), X \subseteq V)$\\ \Indp
    $W_\square \gets \nu Y.(X \cap (\textsc{Pre}_\square(Y) \cup \textsc{Pre}(Y)))$\label{algo_safety:line2}\\
    $P \gets \textsc{Edges}_\square(W_\square,V \setminus W_\square)$\label{algo_safety:line3}\\
    \Return $(P, \LL = \emptyset, C = \emptyset)$
\caption{Constructing templates for $\always X$.}\label{algo_safety}
\end{algorithm}

\begin{theorem}\label{thm1}
$\textnormal{\textsc{SafetyTemplate}}(G,X)$ is almost-sure winning for $\always X$.
\end{theorem}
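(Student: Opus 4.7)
The plan is to unpack $\psi_T$ and show that any path starting in $W_\square$ which satisfies $\psi_T$ stays inside $W_\square$ forever, then use $W_\square \subseteq X$ to conclude. Since the algorithm outputs $\LL = C = \emptyset$, both $\psi_\LL$ and $\psi_C$ are vacuously true, and so $\psi_T$ reduces to $\psi_P = \bigwedge_{(u,v)\in P}\always(u \implies \neg\XX v)$. Thus the semantic content of the template is simply: player Even never traverses an edge that leaves $W_\square$.

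First I would extract the structural properties of $W_\square$ from its definition as the greatest fixed point of $f(Y) = X \cap (\textsc{Pre}_\square(Y) \cup \textsc{Pre}(Y))$. Since $W_\square = f(W_\square)$, every $u \in W_\square$ lies in $X$ and, in addition, falls into one of two cases: either $u \in V_\square$ and $u$ has at least one successor in $W_\square$ (from $\textsc{Pre}_\square(W_\square)$), or $u \in V_\bigcirc \cup V_\triangle$, in which case $u$ must lie in $\textsc{Pre}(W_\square)$ because $\textsc{Pre}_\square$ only contributes Even vertices, so \emph{all} successors of $u$ are in $W_\square$. (An Even vertex might also be in $\textsc{Pre}(W_\square)$, but the weaker condition is all we need.)

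Next I would take an arbitrary $\bar{v} = v_0 v_1 \ldots$ with $v_0 \in W_\square$ and $\bar{v} \vDash \psi_P$, and prove $v_i \in W_\square$ for every $i$ by induction on $i$. The base case is by assumption. For the inductive step, assuming $v_i \in W_\square$, I would split on whether $v_i \in V_\square$: if not, then all successors of $v_i$ lie in $W_\square$ by the structural property above, so $v_{i+1} \in W_\square$; if yes, suppose for contradiction that $v_{i+1} \notin W_\square$. Then $(v_i,v_{i+1}) \in \textsc{Edges}_\square(W_\square, V \setminus W_\square) = P$, so the conjunct $\always(v_i \implies \neg \XX v_{i+1})$ of $\psi_P$ is falsified at position $i$, contradicting $\bar{v} \vDash \psi_P$. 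Hence $v_{i+1} \in W_\square$.

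Combined with $W_\square \subseteq X$, this gives $v_i \in X$ for all $i$, i.e.\ $\bar{v} \vDash \always X$. This shows $\mathcal{L}_{W_\square}(\psi_T) \subseteq \mathcal{L}_{W_\square}(\always X)$, which is precisely the definition of $T$ being winning for $\always X$. There is no real obstacle here; the only care required is the case split at Even versus Odd/Random vertices in the inductive step, together with observing that membership in $V_\triangle$ plays no distinguished role because $\textsc{Pre}$ already demands all successors to be in $Y$.
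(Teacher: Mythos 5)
Your proposal is correct and follows essentially the same route as the paper: an induction showing that any path satisfying $\psi_P$ and starting in the greatest fixed point $W_\square$ never leaves it, followed by $W_\square \subseteq X$. In fact you are slightly more careful than the paper's own proof, which attributes the whole inductive step to $P$ even though $P$ only constrains player Even's edges; your explicit appeal to the fixed-point property that non-Even vertices of $W_\square$ have \emph{all} successors in $W_\square$ is exactly the detail needed to make that step airtight.
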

\begin{proof}
The greatest fixed point in Line~\ref{algo_safety:line2} provides the winning set $W_\square$ for $\always X$. Then, the template is constructed as $T = (P,\emptyset,\emptyset)$ where $P = \textsc{Edges}_\square(W_\square,V \setminus W_\square)$. Consider any infinite path $\bar{v} = v_0 v_1 \ldots$ with $v_0 \in W_\square$ that satisfies $\psi_T$. For any $i \in \mathbb{N}$, if $v_i \in W_\square$ then $v_{i+1} \in W_\square$, as $P$ does not allow a path to use an edge $(v_i,v_{i+1})$ where $v_{i+1} \in V \setminus W_\square$. Therefore, by induction, $v_i \in W_\square$ for all $i \in \mathbb{N}$ and $\bar{v}$ satisfies $\always X$.
\end{proof}

\subsection{Templates for Reachability Objectives}
A reachability objective is of the form $\eventually X$ where $X \subseteq V$. Firstly, Alg.~\ref{algo_reachability} computes $A \gets \textsc{Attr}'(X)$, meaning that all infinite paths starting in $A$ eventually reach $X$. Then, it computes $W_\square \gets \textsc{Attr}'_\square(A)$. By definition, player Even can eventually reach $A$ from a vertex in $W_\square$ regardless of player Odd's strategy. Since player Even must neither leave $W_\square$ nor stay in $W_\square \setminus A$ infinitely often (before reaching $X$), the sets $P$ and $C$ are computed correspondingly.

\begin{algorithm}[t]
\DontPrintSemicolon
\SetAlgoNoLine
$\textsc{ReachabilityTemplate}(G = (V,E,(V_\square,V_\bigcirc,V_\mathlarger{\triangle})), X \subseteq V)$\\ \Indp
    $A \gets \textsc{Attr}'(X)$\\
    $W_\square \gets \textsc{Attr}'_\square(A)$\label{algo_reachability:line3}\\
    $P \gets \textsc{Edges}_\square(W_\square,V \setminus W_\square)$\\
    $C \gets \textsc{Edges}_\square(W_\square \setminus A, W_\square \setminus A)$\\
    \Return $(P, \LL = \emptyset, C)$
\caption{Constructing templates for $\eventually X$.}\label{algo_reachability}
\end{algorithm}

\begin{theorem}\label{thm2}
$\textnormal{\textsc{ReachabilityTemplate}}(G,X)$ is almost-sure winning for $\eventually X$.
\end{theorem}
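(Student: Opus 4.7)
The plan is first to confirm that $W_\square$ is indeed the (almost-sure) winning set for $\eventually X$ and then to fix an arbitrary realizable infinite path $\bar v = v_0 v_1 \ldots$ in $\mathcal{L}_{W_\square}(\psi_T)$ and argue $\bar v \vDash \eventually X$. For the first point, Theorem~\ref{thm:banerjee} gives $\textsc{Attr}'_\square(X)$ as the winning set for $\eventually X$. Since $X \subseteq A = \textsc{Attr}'(X) \subseteq \textsc{Attr}'_\square(X)$, monotonicity and idempotence of the attractor yield $W_\square = \textsc{Attr}'_\square(A) = \textsc{Attr}'_\square(X)$, so Line~\ref{algo_reachability:line3} produces the right set.

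The first substantive step is a simple induction on $i$ showing that, as long as the path has not yet reached $X$, every $v_i$ lies in $W_\square$. The base case is immediate. For the inductive step I split on the owner of $v_i$. If $v_i \in V_\square$, then $\psi_P$ excludes every edge of $\textsc{Edges}_\square(W_\square, V \setminus W_\square)$, so $v_{i+1} \in W_\square$. If $v_i \in V_\bigcirc \cap (W_\square \setminus X)$, the only way $v_i$ can appear in the inner least fixed point of $\textsc{Attr}'_\square(A)$ is through a $\textsc{Pre}(Y)$ step with $Y \subseteq W_\square$, forcing every out-edge of $v_i$ into $W_\square$. If $v_i \in V_\triangle \cap (W_\square \setminus X)$, then $v_i$ must have been included via $\textsc{Pre}(Y)$ or $\textsc{Pre}_\triangle(W_\square, Y)$, again forcing every successor into $W_\square$.

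The second step uses $\psi_C$ together with two rank functions. Let $r_1 : W_\square \to \mathbb{N}$ be the rank induced by the inner $\mu$-iteration of $\textsc{Attr}'_\square(A)$ with $r_1^{-1}(0) = A$, and let $r_2 : A \to \mathbb{N}$ be the analogous rank from $\textsc{Attr}'(X) = \nu Z.\mu Y(X \cup \textsc{Pre}(Y) \cup \textsc{Pre}_\triangle(Z,Y))$ with $r_2^{-1}(0) = X$. By $\psi_C$ there is some $N$ past which no edge of $C$ is taken. For $i \ge N$ with $v_i \in W_\square \setminus A$: an Even vertex can neither leave $W_\square$ (edges in $P$) nor stay inside $W_\square \setminus A$ (edges in $C$), so $v_{i+1} \in A$; an Odd vertex has all successors of strictly smaller $r_1$; a Random vertex has all successors in $W_\square$ and at least one of strictly smaller $r_1$. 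This drives the path into $A$, and then the same scheme applied to $r_2$ inside $A$ (Even and Odd successors inside $\textsc{Pre}(Y)$, Random successors inside $\textsc{Pre}_\triangle(A,Y)$) drives it from $A$ into $X$.

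The step I expect to be the main obstacle is the Random case in the previous paragraph: a strict pathwise reading of $\mathcal{L}_{W_\square}(\psi_T)$ also admits adversarial Random traces that keep picking a non-progress successor and loop forever in $W_\square \setminus A$ or in $A \setminus X$. I would dispose of these by leaning on the almost-sure reachability baked into $\textsc{Attr}'$ and $\textsc{Attr}'_\square$: under any strategy pair consistent with $T$ the set of non-progressing traces has measure zero, and combining this with Proposition~\ref{prop1} gives the required inclusion on the measurable event $\eventually X$. Writing this fairness argument precisely, rather than the underlying rank bookkeeping, is the part that seems to need the most care.
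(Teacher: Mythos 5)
Your proof follows essentially the same route as the paper's: identify $W_\square$ as the almost-sure winning set via Theorem~\ref{thm:banerjee}, use $P$ to confine paths to $W_\square$ and $C$ to rule out staying in $W_\square \setminus A$ forever, and then invoke the definition of $A = \textsc{Attr}'(X)$ to get from $A$ into $X$. You are merely more explicit than the paper about the rank-function bookkeeping and about the fact that at Random vertices the inclusion holds only almost surely rather than pathwise---a caveat the paper's own proof also relies on silently when it writes that the path \emph{almost-surely} reaches $A$ and then $X$.
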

\begin{proof}
By definition, $A$ is the largest possible set of vertices from which any infinite path reaches $X$ almost-surely, regardless of players' strategy. From Line~\ref{algo_reachability:line3} and Thm.~\ref{thm:banerjee} (\cite[Thm.~4]{DBLP:conf/tacas/BanerjeeMMSS22}), $W_\square$ is the winning set for $\eventually X$. Then, $T$ is $(P,\emptyset,C)$ where $P = \textsc{Edges}_\square(W_\square,V \setminus W_\square)$ and $C = \textsc{Edges}_\square(W_\square \setminus A, W_\square \setminus A)$. Consider any infinite path $\bar{v} = v_0 v_1 \ldots$ with $v_0 \in W_\square$ that satisfies $\psi_T$. If $v_0 \in A$, then it almost-surely reaches $X$. Otherwise, $v_0 \in W_\square \setminus A$. By constraints of $P$ and $C$, the path can neither leave $W_\square$ nor stay in $W_\square \setminus A$ infinitely often. Thus, the path must almost-surely reach $A$ and therefore $X$, satisfying $\eventually X$ with probability 1.
\end{proof}

\subsection{Templates for Büchi Objectives}
A Büchi objective is of the form $\buchi X$ where $X \subseteq V$. The set $W_\square$ can be described as a fixed point in Line~\ref{algo_buchi:line2} of Alg.~\ref{algo_buchi} \change{(following the result of Thm.~\ref{thm:banerjee})}. The set $P$ is again the set of all edges leaving $W_\square$. The function $\textsc{LiveGroups}(G,X)$ iteratively constructs $A \gets \textsc{Attr}'(X)$ and $X \gets A \cup \textsc{Pre}_\square(A)$. For each of player Even's vertices in $X \setminus A$, there must be an edge going to $A$. When a path arrives in $X \setminus A$, one of such edges must be used in order to go to $A$. And eventually, the path arrives in $X$. This results in the construction of the set $\LL$.

\begin{algorithm}[t]
\DontPrintSemicolon
\SetAlgoNoLine
$\textsc{BüchiTemplate}(G = (V,E,(V_\square,V_\bigcirc,V_\mathlarger{\triangle})), X \subseteq V)$\\ \Indp
    $W_\square \gets \nu Z.\mu Y((X \cap \textsc{Pre}_\square(Z) \cap \textsc{Pre}(Z)) \cup \textsc{Pre}_\square(Y) \cup \textsc{Pre}(Y) \cup \textsc{Pre}_\mathlarger{\triangle}(Z,Y))$\label{algo_buchi:line2}\\
    $P \gets \textsc{Edges}_\square(W_\square,V \setminus W_\square)$\\
    \Return $(P, \LL = \textsc{LiveGroups}(G,X \cap W_\square), C = \emptyset)$\\ \Indm
\;
$\textsc{LiveGroups}(G = (V,E,(V_\square,V_\bigcirc,V_\mathlarger{\triangle})), X \subseteq V)$\\ \Indp
    $\LL \gets \emptyset$\\
    \While{\textbf{\upshape true}}{
        $A \gets \textsc{Attr}'(X)$\label{algo_buchi:line9}\\
        $X \gets A \cup \textsc{Pre}_\square(A)$\label{algo_buchi:line10}\\
        \lIf{$X = A$}{\textbf{break}}
        $\LL \gets \LL \cup \{\textsc{Edges}_\square(X \setminus A, A)\}$\\
    }
    \Return $\LL$
\caption{Constructing templates for $\buchi X$.}\label{algo_buchi}
\end{algorithm}

\begin{theorem}
$\textnormal{\textsc{BüchiTemplate}}(G,X)$ is almost-sure winning for $\buchi X$.
\end{theorem}
\begin{proof}
Let $T = (P, \LL, \emptyset) \gets \textnormal{\textsc{BüchiTemplate}}(G,X)$. The set $W_\square$ in Line~\ref{algo_buchi:line2} is the winning set of $\buchi X$ by Thm.~\ref{thm:banerjee} (\cite[Thm.~3]{DBLP:conf/tacas/BanerjeeMMSS22}). The set $P$ is the set of edges leaving $W_\square$. Hence, it is sufficient to show that all paths $\bar{v} = v_0 v_1 \ldots$ starting at $v_0 \in W_\square$ and following $T$ visit $X \cap W_\square$ infinitely often. Consider $\textsc{LiveGroups}(G, X \cap W_\square)$. Let $A_i$ and $X_i$ be the sets $A$ and $X$ computed in the $i$-th iteration with $X_0 = X \cap W_\square$. By Lines~\ref{algo_buchi:line9}--\ref{algo_buchi:line10}, any path from $A_i$ reaches $X_{i-1}$ almost-surely, and player Even can move from $X_i \setminus A_i$ to $A_i$. Notice that $\textsc{LiveGroups}$ always terminates when $X_i = A_i = W_\square$, as $W_\square$ is the winning set. Since $\LL$ contains $\textsc{Edges}_\square(X_i\setminus A_i, A_i)$, when the infinite path $\bar{v}$ reaches $X_i \setminus A_i$, it cannot stay there forever due to the restriction of $\LL$. Thus, the path eventually reaches $A_i$ and then $X_{i-1}$, and by induction, reaches $X_0$. The path then either stays in $X_0$ or continues to any $X_i$ or $A_i$, in which case returns to $X_0$. Hence, the path visits $X_0 = X \cap W_\square$ infinitely often, satisfying $\buchi X$ almost-surely.
\end{proof}

\subsection{Templates for Co-Büchi Objectives}
A co-Büchi objective is of the form $\cobuchi X$ where $X \subseteq V$. Algorithm~\ref{algo_cobuchi} constructs an almost-sure winning strategy template with two main steps. First, the algorithm finds the set of vertices that can almost-surely satisfy $\always X$. Then, the algorithm computes the set of vertices that can almost-surely reach that set.

\begin{algorithm}[t]
\DontPrintSemicolon
\SetAlgoNoLine
$\textsc{Co-BüchiTemplate}(G = (V,E,(V_\square,V_\bigcirc,V_\mathlarger{\triangle})), X \subseteq V)$\\ \Indp
    $X \gets \nu Y.(X \cap (\textsc{Pre}_\square(Y) \cup \textsc{Pre}(Y)))$\label{algo_cobuchi:line2}\\
    $A \gets \textsc{Attr}'(X)$\label{algo_cobuchi:line3}\\
    $W_\square \gets \textsc{Attr}'_\square(A)$\\
    $P \gets \textsc{Edges}_\square(W_\square,V \setminus W_\square)$\\
    $C \gets \textsc{Edges}_\square(X, W_\square \setminus X) \cup \textsc{Edges}_\square(W_\square \setminus A, W_\square \setminus A)$\label{algo_cobuchi:line6}\\
    \Return $(P, \LL = \emptyset, C)$
\caption{Constructing templates for $\cobuchi X$.}\label{algo_cobuchi}
\end{algorithm}

\begin{theorem}
$\textnormal{\textsc{Co-BüchiTemplate}}(G,X)$ is almost-sure winning for $\cobuchi X$.
\end{theorem}
\begin{proof}
We can show, in the same manner as Thm.~\ref{thm2}, that $X$ in Line~\ref{algo_cobuchi:line2} becomes the winning set of $\always X$. For Lines~\ref{algo_cobuchi:line3}--\ref{algo_cobuchi:line6}, Alg.~\ref{algo_cobuchi} follows Alg.~\ref{algo_reachability}. Hence, by Thm.~\ref{thm2}, $T = (P,\emptyset,C)$ is almost-sure winning for $\eventually(\always X) = \cobuchi X$.
\end{proof}

\subsection{Templates for Parity Objectives}
A parity objective comes with a priority function $p : V \to \{0, \ldots, d\}$ for some $d \in \mathbb{N}$. We construct an almost-sure winning strategy template for a parity objective by (i) reducing a stochastic game to a deterministic game (i.e., with \textsc{Reduce} in Alg.~\ref{algo_gadget}), (ii) constructing a winning strategy template for the reduced deterministic game, and (iii) converting the template for the deterministic game into a template for the stochastic game.

The construction of a winning strategy template for a deterministic parity game was presented in~\cite{DBLP:conf/cav/AnandNS23}, detailed in Alg.~\ref{algo_detparity}. It extends Alg.~\ref{algo_zielonka} which solves deterministic parity games. Instead of returning the set $P$, the algorithm returns $W_\square$ and $W_\bigcirc$, which can then be used to construct $P$. The algorithm also utilizes \textsc{LiveGroups} from Alg.~\ref{algo_buchi}.

\change{In Alg.~\ref{algo_detparity}, the winning sets $(W_\square,W_\bigcirc)$ are computed in the same way as in Alg.~\ref{algo_zielonka}. We briefly review how the sets $\LL$ and $C$ are recursively constructed, focusing on the case where the minimum priority is even (the odd case is handled analogously). First, if $A = V$, then player Even must ensure that the play visits $X$ infinitely often. In this case, we set $\LL = \textsc{LiveGroups}(G,X)$. Otherwise, the algorithm solves the subgame $G \setminus A$, resulting in $(W'_\square,W'_\bigcirc,\LL',C')$.

If $W'_\bigcirc = \emptyset$, then player Even must not only satisfy the constraints of $\LL'$ and $C'$, but also ensure that the play visits $X$ infinitely often. Thus, we set $\LL = \LL' \cup \textsc{LiveGroups}(G,X)$ and $C = C'$. Otherwise, the algorithm computes the set $B$ and solves the subgame $G \setminus B$, resulting in $(W''_\square,W''_\bigcirc,\LL'',C'')$. As established in Alg.~\ref{algo_zielonka}, the winning set of player Even is $W''_\square$. Therefore, in this case, it is sufficient for player Even to follow $\LL''$ and $C''$ in $G$.}

Regarding the process of converting the template, we give the algorithm in Alg.~\ref{algo_parity}. Essentially, we remove from $\LL'$ and $C'$ all edges $(u',v')$ such that $u'$ is part of a gadget (i.e., there is no corresponding vertex $u$ in $G$). The result of this procedure is then an almost-sure winning strategy template for the parity objective.

\begin{algorithm}[t]
\DontPrintSemicolon
\SetAlgoNoLine
$\textsc{DetParityTemplate}(G = (V,E,(V_\square,V_\bigcirc,\emptyset)), p : V \to \{0, \ldots, d\})$\\ \Indp
    $x \gets \min \{p(v) : v \in V\}; X \gets \arg\min \{p(v) : v \in V\}$\\
    \If{\upshape $x$ is even}{
        $A \gets \textsc{Attr}_\square(X)$\\
        \lIf{$A = V$}{\Return $(W_\square,W_\bigcirc,\LL,C) = (V,\emptyset, \textsc{LiveGroups}(G,X),\emptyset)$}
        $(W'_\square,W'_\bigcirc,\LL',C') \gets \textsc{DetParityTemplate}(G \setminus A, p)$\\
        \lIf{$W'_\bigcirc = \emptyset$}{\Return $(W_\square,W_\bigcirc,\LL,C) = (V,\emptyset,\LL' \cup \textsc{LiveGroups}(G,X),C')$}
        $B \gets \textsc{Attr}_\bigcirc(W'_\bigcirc)$\\
        $(W''_\square,W''_\bigcirc,\LL'',C'') \gets \textsc{DetParityTemplate}(G \setminus B, p)$\\
        \Return $(W_\square,W_\bigcirc,\LL,C) = (W''_\square,W''_\bigcirc \cup B, \LL'', C'')$
    }
    \Else{
        $A \gets \textsc{Attr}_\bigcirc(X)$\\
        \lIf{$A = V$}{\Return $(W_\square,W_\bigcirc,\LL,C) = (\emptyset,V,\emptyset,\emptyset)$}
        $(W'_\square,W'_\bigcirc,\LL',C') \gets \textsc{DetParityTemplate}(G \setminus A, p)$\\
        \lIf{$W'_\square = \emptyset$}{\Return $(W_\square,W_\bigcirc,\LL,C) = (\emptyset,V,\emptyset,\emptyset)$}
        $\LL' \gets \LL' \cup \textsc{LiveGroups}(G,W'_\square)$\tcp*{Algorithm~\ref{algo_buchi}}
        $C' \gets C' \cup \textsc{Edges}_\square(W'_\square,V \setminus W'_\square)$\\
        $B \gets \textsc{Attr}_\square(W'_\square)$\\
        $(W''_\square,W''_\bigcirc,\LL'',C'') \gets \textsc{DetParityTemplate}(G \setminus B, p)$\\
        \Return $(W_\square,W_\bigcirc,\LL,C) = (W''_\square \cup B,W''_\bigcirc,\LL' \cup \LL'',C' \cup C'')$
    }
\caption{Constructing templates for deterministic parity games~\cite{DBLP:conf/cav/AnandNS23}.}\label{algo_detparity}
\end{algorithm}

\begin{algorithm}[t]
\DontPrintSemicolon
\SetAlgoNoLine
$\textsc{ParityTemplate}(G = (V,E,(V_\square,V_\bigcirc,V_\mathlarger{\triangle})), p : V \to \{0, \ldots, d\})$\\ \Indp
    $(G',p') \gets \textsc{Reduce}(G,p)$\tcp*{Algorithm~\ref{algo_gadget}}
    $(W'_\square, W'_\bigcirc, \LL', C') \gets \textsc{DetParityTemplate}(G',p')$\tcp*{Algorithm~\ref{algo_detparity}}
    $P \gets \{ (u,v) \in E : (u',v') \in \textsc{Edges}_\square(W'_\square,W'_\bigcirc)\}$\tcp*{$u'$ in $G'$ corresponds to $u$ in $G$}
    $\LL \gets \emptyset; C \gets \emptyset$\\
    \ForEach{$L' \in \LL'$}{
        $L \gets \emptyset$\\
        \ForEach{$(u,v) \in E \cap (V_\square \times V)$}{
            \lIf{$(u',v') \in L'$}{$L \gets L \cup \{(u,v)\}$}
        }
        $\LL \gets \LL \cup \{L\}$
    }
    \ForEach{$(u,v) \in E \cap (V_\square \times V)$}{
        \lIf{$(u',v') \in C'$}{$C \gets C \cup \{(u,v)\}$}
    }
    \Return $(P, \LL, C)$
\caption{Constructing templates for stochastic parity games.}\label{algo_parity}
\end{algorithm}

\begin{theorem}
$\textnormal{\textsc{ParityTemplate}}(G,p)$ is almost-sure winning for a parity objective on $p$.
\end{theorem}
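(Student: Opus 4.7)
The plan is to reduce to the deterministic case via the gadget reduction~\cite{DBLP:conf/csl/ChatterjeeJH03} and invoke the correctness of $\textsc{DetParityTemplate}$ from~\cite{DBLP:conf/cav/AnandNS23}. Let $T' = (P', \LL', C')$ with $P' = \textsc{Edges}_\square(W'_\square, W'_\bigcirc)$ be the template computed by $\textsc{DetParityTemplate}(G', p')$ inside Alg.~\ref{algo_parity}, so that $\mathcal{L}_{W'_\square}(\psi_{T'}) \subseteq \mathcal{L}_{W'_\square}(\varphi')$ where $\varphi'$ is the parity objective on $p'$. First, using Lemma~\ref{lem:chatterjee} together with the converse direction of~\cite{DBLP:conf/csl/ChatterjeeJH03}, I would establish $W_\square = \{v \in V : v' \in W'_\square\}$. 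The conversion of $(P', \LL', C')$ to $(P, \LL, C)$ in Alg.~\ref{algo_parity} would then ensure that the non-gadget portion of $\psi_{T'}$ coincides exactly with $\psi_T$.

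The main argument would show $\mathcal{L}_{W_\square}(\psi_T) \subseteq \mathcal{L}_{W_\square}(\varphi)$. I would take any infinite path $\pi = v_0 v_1 \ldots$ in $G$ with $v_0 \in W_\square$ and $\pi \vDash \psi_T$, and lift $\pi$ to a path $\pi'$ in $G'$ by expanding each transition through a Random vertex $v_i \to v_{i+1}$ into a gadget traversal $v_i' \to v'_{i,k} \to v'_{i,k,j} \to v_{i+1}'$. The indices $(k,j)$ together with the Odd choice at $v_i'$ would be selected so that $\pi'$ respects $\psi_{T'}$: outside gadgets this holds automatically from $\pi \vDash \psi_T$, while inside gadgets one follows a fixed template-respecting Even-strategy in $G'$ (which exists since $T'$ is winning from $v_0'$, extractable as in Sect.~\ref{sec5}). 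By the Anand et al.\ correctness, $\pi' \vDash \varphi'$. The gadget priority design of~\cite{DBLP:conf/csl/ChatterjeeJH03} would then yield $\pi \vDash \varphi$: a template-respecting winning play in $G'$ constrains Even's gadget choices to preserve the parity condition of the non-gadget projection.

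The hard part will be the gadget-lifting step, which splits into two subtleties: (i) ensuring the lifted $\pi'$ is a legal play in $G'$ that both respects $\psi_{T'}$ and has the prescribed $V$-successors dictated by $\pi$, which requires reconciling Even's template-driven gadget moves with those successors; and (ii) transferring the parity condition from $\pi'$ back to $\pi$ through the gadget priorities, which requires ruling out that $\varphi'$ is satisfied via spurious low even priorities introduced purely by gadget traversals rather than by priorities visited in $\pi$ itself. Both points would be resolved by careful bookkeeping of the gadget structure combined with the template semantics, building on the correspondence between winning strategies in $G'$ and almost-sure winning strategies in $G$ at the heart of~\cite{DBLP:conf/csl/ChatterjeeJH03}; in particular, any Even-strategy respecting $T'$ in $G'$ behaves, at the non-gadget projection, like an almost-sure winner in $G$, so no template-compliant lifting can win $\varphi'$ without its projection winning $\varphi$.
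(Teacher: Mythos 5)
Your overall architecture---reduce via Alg.~\ref{algo_gadget}, invoke the correctness of \textsc{DetParityTemplate} from \cite{DBLP:conf/cav/AnandNS23}, and transfer the template back through the gadgets---is the same as the paper's, and your identification of $W_\square$ with the projection of $W'_\square$ and of $\psi_T$ with the non-gadget part of $\psi_{T'}$ matches what Alg.~\ref{algo_parity} is doing. The difference lies in how the transfer is argued, and that is where your proposal has a genuine gap. The paper argues at the level of strategies: it cites the proof of Lem.~\ref{lem:chatterjee}, which converts $W'_\square$ and any winning strategy of $(G',p')$ into $W_\square$ and an \emph{almost-sure} winning strategy of $(G,p)$ by erasing the gadget vertices and edges, and asserts that the same erasure converts the winning template $T'$ into $T$. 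You instead attempt a per-path lifting: take a single $\psi_T$-compliant path $\pi$ in $G$, lift it to a $\psi_{T'}$-compliant play $\pi'$ in $G'$, conclude $\pi' \vDash \varphi'$, and push the parity condition back down to $\pi$. The second subtlety you flag is not resolvable by ``careful bookkeeping'': the gadget of \cite{DBLP:conf/csl/ChatterjeeJH03} deliberately introduces third-layer vertices $v'_{k,j}$ carrying priority $j$, and a lifted play that passes through a low even such priority infinitely often satisfies $\varphi'$ regardless of the priorities occurring along $\pi$. The soundness of that reduction is inherently a statement about the probability measure over plays under a fixed pair of strategies (deviations exploiting the gadget priorities correspond to probability-zero events under the uniform resolution of Random), not a statement about each individual play; a single path that resolves every Random vertex adversarially forever can comply with $\psi_T$ yet fail the parity objective, so no choice of gadget indices makes the implication ``$\pi' \vDash \varphi'$ hence $\pi \vDash \varphi$'' go through path by path. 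Your own closing sentence concedes the point by appealing to the projection behaving ``like an almost-sure winner,'' which is exactly the probabilistic fact a per-path argument is not entitled to use.

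In short: the reduction scaffolding is right, but the core transfer step has to be carried out at the level of strategies and probability measures---as in the proof of Lem.~\ref{lem:chatterjee} combined with Prop.~\ref{prop1}---rather than by lifting individual paths; as written, the ``hard part'' you defer is precisely the step that fails.
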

\begin{proof}
Let $(G',p') \gets \textsc{Reduce}(G,p)$ and Alg.~\ref{algo_detparity} returns $(W'_\square,W'_\bigcirc,\LL',C')$, following \change{the first two} lines. Then, $T' = (P' = \textsc{Edges}_\square(W'_\square,W'_\bigcirc),\LL',C')$ is winning for the parity objective of $(G',p')$ due to~\cite[Thm.~4]{DBLP:conf/cav/AnandNS23}. Moreover, it is shown in the proof of Lem.~\ref{lem:chatterjee} (\cite[Lem. 3]{DBLP:conf/csl/ChatterjeeJH03}) that $W'_\square$ and any winning strategy $\sigma'_\square$ for the parity objective of $(G',p')$ can be converted to $W_\square$ and an almost-sure winning strategy $\sigma_\square$ for the parity objective of $(G,p)$ by removing all vertices and edges introduced by gadgets. In a similar manner, a winning template $T'$ for the parity objective of $(G',p')$ can be converted to an almost-sure winning template $T$ for the parity objective of $(G,p)$ by removing all edges introduced by gadgets. This is exactly Alg.~\ref{algo_parity}. Thus, $\textsc{ParityTemplate}(G,p)$ is almost-sure winning for the parity objective on $p$.
\end{proof}

\subsection{Composing Almost-Sure Winning Strategy Templates}

We end this section with a remark on composing templates, discussed in~\cite{DBLP:conf/cav/AnandNS23}. In short, we can compose two almost-sure winning strategy templates $T = (P,\LL,C)$ for a winning objective $\varphi$ and $T' = (P',\LL',C')$ for a winning objective $\varphi'$ into $T'' = (P'', \LL \cup \LL', C \cup C')$ for the winning objective $\varphi \wedge \varphi'$, \change{where $P''$ is defined in Line \ref{alg9:line2} of Alg.~\ref{algo_compose}. This composition method works} unless $T$ and $T'$ \emph{conflict} (defined below). Otherwise, the conflict requires us to recompute an almost-sure winning strategy template for $\varphi \wedge \varphi'$ from scratch. This concept was formalized and proved by~\cite{DBLP:conf/cav/AnandNS23}, as shown in Alg.~\ref{algo_compose} and Thm.~\ref{thm:compose}.

\begin{definition}[Conflict-Freeness, {\cite[Def.~1]{DBLP:conf/cav/AnandNS23}}]
Given an SG $G$, a strategy template $T = (P, \LL, C)$ is \emph{conflict-free} if both following conditions are true: (i) for all $u \in V$, $\{(u,v) \in E\} \not\subseteq P \cup C$, and (ii) for all $u \in V$ and $L \in \LL$, $\{(u,v) \in L\} \neq \emptyset \implies \{(u,v) \in L\} \not\subseteq P \cup C$.
\end{definition}

\begin{theorem}[{\cite[Thm.~5]{DBLP:conf/cav/AnandNS23}}]\label{thm:compose}
Given an SG $G$, a pair $(W_\square,T)$ for a winning objective $\varphi$, and a pair $(W'_\square,T')$ for a winning objective $\varphi'$. Let $(W''_\square,T'') \gets \textnormal{\textsc{ComposeAlmostSure}}(G,\varphi,W_\square,T,\varphi',W'_\square,T')$. Then, $T''$ is a strategy template that is almost-sure winning with respect to the winning set $W''_\square$ of player Even for the winning objective $\varphi \wedge \varphi'$.
\end{theorem}

\begin{algorithm}[t]
\DontPrintSemicolon
\SetAlgoNoLine
$\textsc{ComposeAlmostSure}(G = (V,E,(V_\square,V_\bigcirc,V_\mathlarger{\triangle})), \varphi, W_\square, T = (P,\LL,C), \varphi', W'_\square, T' = (P',\LL',C'))$\\ \Indp
    $W''_\square \gets W_\square \cap W'_\square; P'' \gets \{(u,v) \in E : u \in W''_\square \cap V_\square \wedge v \not\in W''_\square\}; \LL'' \gets \LL \cup \LL'; C'' \gets C \cup C'$\label{alg9:line2}\\
    $V_\textnormal{conflict} \gets \{u \in W''_\square : (\{u\} \times W''_\square) \cap E \subseteq C''\} \cup \{u \in W''_\square : \exists L'' \in \LL'', \emptyset \subsetneq (\{u\} \times W''_\square) \cap L'' \subseteq C''\}$\\
    \lIf{$V_\textnormal{conflict} = \emptyset$}{\Return $(W''_\square, T'' = (P'', \LL'', C''))$}
    \Return $\textsc{ConstructTemplate}(G, \varphi \wedge \varphi' \wedge \bigwedge_{u \in V_\textnormal{conflict}} \cobuchi(\neg u))$\tcp*{recompute from scratch}
\caption{Composing almost-sure winning strategy templates~\cite[Alg. 4]{DBLP:conf/cav/AnandNS23}.}\label{algo_compose}
\end{algorithm}

\change{To compose two strategy templates $T$ and $T'$, Alg.~\ref{algo_compose} first computes the winning set of player Even for the objective $\varphi \wedge \varphi'$ as $W''_\square = W_\square \cap W'_\square$. Consequently, the set of prohibited edges $P''$ consists of edges of player Even that leave $W''_\square$. The sets $\LL''$ and $C''$ are composed as $\LL'' = \LL \cup \LL'$ and $C'' = C \cup C'$.

This composition works provided that the resulting strategy template $T''$ is conflict-free. Conflicts can arise in two cases: (i) there exists a vertex $u \in W''_\square$ from which all out-going edges are either prohibited or co-live (i.e., no edge from $u$ can be used infinitely often), or (ii) there exists a vertex $u \in W''_\square$ and a live-group $L'' \in \LL''$ such that all out-going edges from $u$ within $L''$ are prohibited or co-live. In either case, player Even cannot satisfy the template $T''$ if the play visits $u$ infinitely often. Therefore, if such vertices exist---identified by Alg.~\ref{algo_compose} as $V_\textnormal{conflict}$---a new strategy template must be constructed from scratch for the objective $\varphi \wedge \varphi'$ while ensuring that vertices in $V_\textnormal{conflict}$ are not visited infinitely often.}

\section{Positive Winning Strategy Templates}\label{sec99}
So far, we only consider almost-sure winning strategies and templates for those vertices in the winning set $W_\square$ of player Even. It is obvious that player Even has no chance of winning from vertices in the winning set $W_\bigcirc$ of player Odd. Nevertheless, for vertices in the set $W_? := V \setminus (W_\square \cup W_\bigcirc)$, player Even has a non-zero probability of winning if player Even chooses a correct successor. Precisely, it has to follow a \emph{positive winning} strategy. As an example, the bottom-left vertex in Fig.~\ref{sg}(middle) belongs to player Even. If it chooses to move to the right, the winning probability is zero. On the other hand, moving up gives the winning probability of 0.5. To capture positive winning strategies, we develop \emph{positive winning strategy templates}.

\begin{definition}[Positive Winning Strategy Template]\label{def:pos_tem}
Given an SG $G$ and a winning objective $\varphi$, a strategy template $T = (P, \LL, C)$ is \emph{positive winning for $\varphi$} if $\Pr^{\sigma_\square,\sigma_\bigcirc}_{v_0}[\psi_T] = 1 \implies \Pr^{\sigma_\square,\sigma_\bigcirc}_{v_0}[\varphi] > 0$, for any $v_0 \in W_\square \cup W_?$ and any pair of strategies $(\sigma_\square,\sigma_\bigcirc)$.
\end{definition}

Analogous to almost-sure winning strategy templates, it follows straightforwardly from Def.~\ref{def:pos_tem} that a strategy $\sigma_\square$ of player Even is positive winning if, under any strategy $\sigma_\bigcirc$ of player Odd, all generated paths from $W_\square \cup W_?$ satisfy $\psi_T$ of a positive winning template $T$ with probability 1.

Note that a strategy $\sigma_\square$ of player Even which follows a positive winning strategy template, however, does not guarantee to satisfy the winning objective with the maximum (i.e., optimal) probability. \change{Constructing optimal strategies for stochastic games typically involves intricate procedures, even for simple objectives like reachability~\cite{DBLP:conf/cav/PhalakarnTHH20,DBLP:conf/lics/KretinskyMW23}. To the best of our knowledge, while optimal strategies for parity objectives are known to exist~\cite{DBLP:conf/csl/ChatterjeeJH03}, we are not aware of any algorithm in the literature that explicitly constructs such strategies. Moreover, composing templates while preserving optimality is challenging, as it requires addressing multi-objective optimization.}

Also notice that a positive winning strategy template does not have to be almost-sure winning. Nonetheless, it is possible that a template is both almost-sure and positive winning.

\subsection{Constructing Positive Winning Strategy Templates}

We propose Alg.~\ref{algo_positive} for constructing a positive winning strategy template for a given winning objective. It first computes the winning sets $W_\square,W_\bigcirc$ of both players and also an almost-sure winning strategy template $T_{(a)} = (P_{(a)},\LL_{(a)}, C_{(a)})$ for the winning objective. The set $W_\bigcirc$ can be computed in the same way as $W_\square$ in Alg.~\ref{algo_safety}--\ref{algo_parity}. Then, the algorithm imposes two restrictions on $W_?$: (i) edges from $W_?$ to $W_\bigcirc$ must not be used, and (ii) edges within $W_?$ must be used only finitely often. With additional constraints added, the algorithm returns a positive winning strategy template $T_{(p)}$. By construction, the template $T_{(p)}$ is almost-sure winning as well.

\begin{algorithm}[t]
\DontPrintSemicolon
\SetAlgoNoLine
$\textsc{PositiveTemplate}(G = (V,E,(V_\square,V_\bigcirc,V_\mathlarger{\triangle})), \varphi)$\\ \Indp
    Compute $W_\square,W_\bigcirc$, and an almost-sure winning $T_{(a)} = (P_{(a)},\LL_{(a)},C_{(a)})$ for $\varphi$\label{alg10:line2}\tcp*{Algorithms~\ref{algo_safety}-\ref{algo_parity}}
    $W_? \gets V \setminus (W_\square \cup W_\bigcirc)$\\
    $P_{(p)} \gets P_{(a)} \cup \textsc{Edges}_\square(W_?,W_\bigcirc); \LL_{(p)} \gets \LL_{(a)}; C_{(p)} \gets C_{(a)} \cup \textsc{Edges}_\square(W_?,W_?)$\\
    \Return $T_{(p)} = (P_{(p)}, \LL_{(p)}, C_{(p)})$
\caption{Constructing positive winning strategy templates.}\label{algo_positive}
\end{algorithm}

\begin{theorem}\label{thm:pos}
$\textnormal{\textsc{PositiveTemplate}}(G,\varphi)$ is almost-sure and positive winning for $\varphi$.
\end{theorem}
\begin{proof}
From Line~\ref{alg10:line2} of Alg.~\ref{algo_positive}, $T_{(a)}$ is almost-sure winning with respect to $W_\square$. This implies that $T_{(a)}$ is positive winning with respect to $W_\square$. Hence, we only need to show that $T_{(p)}$ is positive winning with respect to $W_?$. We assume that player Odd plays optimally as this gives minimum winning probability for player Even. Then, we argue that an infinite path $\bar{v} = v_0 v_1 \ldots$ that begins at $v_0 \in W_?$ and follows $T_{(p)}$ must eventually visit $W_\square \cup W_\bigcirc$.

For the sake of contradiction, suppose $\bar{v}$ stays in $W_?$ with probability 1. Then, some subset $W_\ast$ of $W_?$ is visited infinitely often. Due to $C_{(p)}$, we have $W_\ast \cap V_\square = \emptyset$. Hence, $W_\ast \subseteq V_\bigcirc \cup V_\mathlarger{\triangle}$. However, since player Odd plays optimally and it chooses to stay in $W_\ast$, $W_\ast \subseteq W_?$ must be almost-sure winning for player Odd. This implies $W_\ast \subseteq W_\bigcirc$, contradicting the fact that $W_\bigcirc \cap W_? = \emptyset$. So, we conclude that $\bar{v}$ eventually exits $W_?$ and reaches $W_\square \cup W_\bigcirc$.

If $\bar{v}$ reaches $W_\bigcirc$ with probability 1, then $W_?$ must again be almost-sure winning for player Odd. Therefore, $\bar{v}$ must reach $W_\bigcirc$ with probability less than 1, implying that $\bar{v}$ reaches $W_\square$ with non-zero probability.
\end{proof}

\subsection{Composing Positive Winning Strategy Templates}

From Alg.~\ref{algo_positive}, one can see that a positive winning strategy template $T_{(p)}$ can be constructed when $W_\square$, $W_\bigcirc$, and an almost-sure winning strategy template $T_{(a)}$ are given. To compose positive winning strategy templates, one can compose the winning sets and the almost-sure templates, and then construct a positive winning strategy template accordingly. This procedure is formalized in Alg.~\ref{algo_compose_pos}. \change{Specifically, the algorithm first computes the winning sets of players Even and Odd for the objective $\varphi \wedge \varphi'$, from which the set $W''_?$ is immediately derived. Next, the sets $P''_{(a)}$, $\LL''_{(a)}$, and $C''_{(a)}$ are constructed in the same manner as in Alg.~\ref{algo_compose} for the almost-sure winning strategy templates composition. Subsequently, the positive winning strategy template---consisting of $P''_{(p)}$, $\LL''_{(p)}$, and $C''_{(p)}$---is constructed based on $(W''_?,W''_\bigcirc,P''_{(a)},\LL''_{(a)},C''_{(a)})$, using the positive winning strategy template construction (Alg.~\ref{algo_positive}). Finally, the conflict-freeness of $T''_{(a)}$ is verified using the same procedure as in Alg.~\ref{algo_compose}.}

\begin{algorithm}[t]
\DontPrintSemicolon
\SetAlgoNoLine
$\textsc{ComposePositive}(G, \varphi, W_\square, W_\bigcirc, T_{(a)} = (P_{(a)},\LL_{(a)},C_{(a)}), \varphi', W'_\square, W'_\bigcirc, T'_{(a)} = (P'_{(a)},\LL'_{(a)},C'_{(a)}))$\\ \Indp
    $W''_\square \gets W_\square \cap W'_\square; W''_\bigcirc \gets \textsc{Attr}'_\bigcirc(W_\bigcirc \cup W'_\bigcirc); W''_? \gets V \setminus (W''_\square \cup W''_\bigcirc)$\\
    $P''_{(a)} \gets \{(u,v) \in E : u \in W''_\square \cap V_\square \wedge v \not\in W''_\square\}; \LL''_{(a)} \gets \LL_{(a)} \cup \LL'_{(a)}; C''_{(a)} \gets C_{(a)} \cup C'_{(a)}$\\
    $P''_{(p)} \gets P''_{(a)} \cup \textsc{Edges}_\square(W''_?,W''_\bigcirc); \LL''_{(p)} \gets \LL''_{(a)}; C''_{(p)} \gets C''_{(a)} \cup \textsc{Edges}_\square(W''_?,W''_?)$\\
    $V_\textnormal{conflict} \gets \{u \in W''_\square : (\{u\} \times W''_\square) \cap E \subseteq C''_{(a)}\} \cup \{u \in W''_\square : \exists L''_{(a)} \in \LL''_{(a)}, \emptyset \subsetneq (\{u\} \times W''_\square) \cap L''_{(a)} \subseteq C''_{(a)}\}$\\
    \lIf{$V_\textnormal{conflict} = \emptyset$}{\Return $(W''_\square, W''_\bigcirc, T''_{(a)} = (P''_{(a)}, \LL''_{(a)}, C''_{(a)}), T''_{(p)} = (P''_{(p)}, \LL''_{(p)}, C''_{(p)}))$}
    \Return $\textsc{PositiveTemplate}(G, \varphi \wedge \varphi' \wedge \bigwedge_{u \in V_\textnormal{conflict}} \cobuchi(\neg u))$\tcp*{recompute from scratch}
\caption{Composing positive winning strategy templates.}\label{algo_compose_pos}
\end{algorithm}

\begin{theorem}
Given an SG $G$, a tuple $(W_\square,W_\bigcirc,T_{(a)})$ for a winning objective $\varphi$, and a tuple $(W'_\square,W'_\bigcirc,T'_{(a)})$ for a winning objective $\varphi'$. Let $(W''_\square, W''_\bigcirc,T''_{(a)},T''_{(p)}) \gets \textnormal{\textsc{ComposePositive}}(G,\varphi,W_\square,W_\bigcirc,T_{(a)},\varphi',W'_\square,W'_\bigcirc,T'_{(a)})$. Then, $T''_{(p)}$ is a strategy template that is positive winning with respect to the set $V \setminus W''_\bigcirc$ for the winning objective $\varphi \wedge \varphi'$.
\end{theorem}
\begin{proof}
In a case of conflict (i.e., $V_\textnormal{conflict} \neq \emptyset$), a positive winning template is recomputed from scratch, giving a correct output by Thm.~\ref{thm:pos}. Thus, we assume $V_\textnormal{conflict} = \emptyset$. By Thm.~\ref{thm:compose}, $W''_\square$ is the winning set of player Even and $T''_{(a)}$ is almost-sure winning for $\varphi \wedge \varphi'$. If $W''_\bigcirc$ is also the winning set of player Odd for $\varphi \wedge \varphi'$, then $T''_{(p)}$ is positive winning by Thm.~\ref{thm:pos}. Hence, it remains to prove that $W''_\bigcirc = \textnormal{\textsc{Attr}}'_\bigcirc(W_\bigcirc \cup W'_\bigcirc)$ is the winning set of player Odd for $\varphi \wedge \varphi'$.

Without loss of generality, we assume that both players play optimally. It is clear that if $v \in \textnormal{\textsc{Attr}}'_\bigcirc(W_\bigcirc \cup W'_\bigcirc)$, then $\Pr^{\sigma_\square,\sigma_\bigcirc}_v[\varphi \wedge \varphi'] = 0$, which means $v$ is in the winning set of player Odd. For the other direction, we show that if $v \not\in \textnormal{\textsc{Attr}}'_\bigcirc(W_\bigcirc \cup W'_\bigcirc)$, then $\Pr^{\sigma_\square,\sigma_\bigcirc}_v[\varphi \wedge \varphi'] \neq 0$. If $v \in W''_\square$, we have $\Pr^{\sigma_\square,\sigma_\bigcirc}_v[\varphi \wedge \varphi'] = 1$ by Thm.~\ref{thm:compose}.

The only case left is $v \in V \setminus (W''_\square \cup \textnormal{\textsc{Attr}}'_\bigcirc(W_\bigcirc \cup W'_\bigcirc))$, in which we show that an infinite path $\bar{v} = v_0 v_1 \ldots$ starting at $v_0 = v$ must eventually reach $W''_\square \cup \textnormal{\textsc{Attr}}'_\bigcirc(W_\bigcirc \cup W'_\bigcirc)$. For the purpose of contradiction, suppose $\bar{v}$ stays in $V \setminus (W''_\square \cup \textnormal{\textsc{Attr}}'_\bigcirc(W_\bigcirc \cup W'_\bigcirc))$ with probability 1. Then, there must be a subset $W_\ast$ that is visited infinitely often by the path. As both players play optimally, the probabilities $\Pr^{\sigma_\square,\sigma_\bigcirc}_v[\varphi]$ and $\Pr^{\sigma_\square,\sigma_\bigcirc}_v[\varphi']$ can be determined as either 0 or 1 for all $v \in W_\ast$. If $\Pr^{\sigma_\square,\sigma_\bigcirc}_v[\varphi]$ (resp. $\Pr^{\sigma_\square,\sigma_\bigcirc}_v[\varphi']$) is 0, then $v$ must be in $W_\bigcirc$ (resp. $W'_\bigcirc$). If $\Pr^{\sigma_\square,\sigma_\bigcirc}_v[\varphi] = \Pr^{\sigma_\square,\sigma_\bigcirc}_v[\varphi'] = 1$, then $v$ must be in $W''_\square$. In both cases, a contradiction occurs.

Now, we have that the path $\bar{v}$ must almost-surely arrives at $W''_\square \cup \textnormal{\textsc{Attr}}'_\bigcirc(W_\bigcirc \cup W'_\bigcirc)$. If $\bar{v}$ reaches $\textnormal{\textsc{Attr}}'_\bigcirc(W_\bigcirc \cup W'_\bigcirc)$ with probability 1, then $v$ must be in $\textnormal{\textsc{Attr}}'_\bigcirc(W_\bigcirc \cup W'_\bigcirc)$, leading to a contradiction. Therefore, $\bar{v}$ must reach $W''_\square$ with non-zero probability and $\Pr^{\sigma_\square,\sigma_\bigcirc}_v[\varphi \wedge \varphi'] \neq 0$.
\end{proof}

\section{Templates Comparison: Permissiveness and Sizes}\label{sec6}

Two properties of strategy templates, namely \emph{permissiveness} and \emph{sizes}, can be seen related. Intuitively, small templates are more permissive as they impose less restrictions. However, this is not always true. We explore these concepts in this section.

\subsection{Permissiveness of Strategy Templates}

In order to define the \emph{permissiveness} of strategy templates, we first introduce the definition of the \emph{language generated by an LTL formula} as follows.

\begin{definition}[Language generated by LTL Formula]
Given an SG $G$ and an LTL formula $\psi$, the \emph{language generated by $\psi$} is $\mathcal{L}(\psi) = \{ \bar{v} = v_0 v_1 \ldots : \bar{v} \vDash \psi \wedge \forall i \in \mathbb{N}, (v_i,v_{i+1}) \in E\}$.
\end{definition}

We now state the definition of the \emph{permissiveness} of strategy templates. Note that our definition follows the concept of permissiveness as introduced by~\cite{ramadge1987supervisory}.

\begin{definition}[Permissiveness of Strategy Template]\label{def:permissiveness}
Given an SG $G$ and two strategy templates $T = (P, \LL, C)$, $T' = (P', \LL', C')$, we say that $T$ is \emph{no more permissive than} $T'$ if $\mathcal{L}(\psi_T) \subseteq \mathcal{L}(\psi_{T'})$.
\end{definition}

When taking the winning criterion and objective into consideration, what we would like to have is a most permissive template that satisfies the winning criterion and objective. From our definition, we prove two propositions.

\begin{proposition}\label{prop2}
Given an SG $G$ and two strategy templates $T = (P, \LL, C)$ and $T' = (P', \LL', C')$, $T$ is no more permissive than $T'$ if $P \supseteq P'$, $\LL \supseteq \LL'$, and $C \supseteq C'$.
\end{proposition}
\begin{proof}
If $P \supseteq P'$, then $\mathcal{L}(\psi_P) \subseteq \mathcal{L}(\psi_{P'})$. Similarly, $\LL \supseteq \LL'$ and $C \supseteq C'$ imply $\mathcal{L}(\psi_\LL) \subseteq \mathcal{L}(\psi_{\LL'})$ and $\mathcal{L}(\psi_C) \subseteq \mathcal{L}(\psi_{C'})$, respectively. As a result, we obtain $\mathcal{L}(\psi_T = \psi_P \wedge \psi_\LL \wedge \psi_C) = \mathcal{L}(\psi_P) \cap \mathcal{L}(\psi_\LL) \cap \mathcal{L}(\psi_C) \subseteq \mathcal{L}(\psi_{P'}) \cap \mathcal{L}(\psi_{\LL'}) \cap \mathcal{L}(\psi_{C'}) = \mathcal{L}(\psi_{T'} = \psi_{P'} \wedge \psi_{\LL'} \wedge \psi_{C'})$.
\end{proof}

\begin{proposition}
Given an SG $G$, a winning objective $\varphi$, and two conflict-free strategy templates $T = (P, \LL, C)$ and $T' = (P', \LL', C')$. If $T$ is no more permissive than $T'$ and $T'$ is almost-sure (resp. positive) winning, then $T$ is almost-sure (resp. positive) winning.
\end{proposition}
\begin{proof}
We provide the proof for the case of almost-sure winning. The case of positive winning is similar.

Let $\mathcal{L}(\psi_T) \subseteq \mathcal{L}(\psi_{T'})$. Then, $\Pr^{\sigma_\square,\sigma_\bigcirc}_{v_0}[\psi_T] = 1 \implies \Pr^{\sigma_\square,\sigma_\bigcirc}_{v_0}[\psi_{T'}] = 1$, for any $v_0 \in W_\square$ and any pair of strategies $(\sigma_\square,\sigma_\bigcirc)$. Since $T'$ is almost-sure winning, we have, by Def.~\ref{def:as_tem}, $\Pr^{\sigma_\square,\sigma_\bigcirc}_{v_0}[\psi_{T'}] = 1 \implies \Pr^{\sigma_\square,\sigma_\bigcirc}_{v_0}[\varphi] = 1$. Therefore, we obtain $\Pr^{\sigma_\square,\sigma_\bigcirc}_{v_0}[\psi_{T}] = 1 \implies \Pr^{\sigma_\square,\sigma_\bigcirc}_{v_0}[\varphi] = 1$ as required.
\end{proof}

\subsection{Sizes of Strategy Templates}

We consider another property of a strategy template which is its \emph{size}. It is formally defined as follows.

\begin{definition}[Size of Template]
Given two templates $T = (P, \LL, C)$ and $T' = (P', \LL', C')$. The \emph{overall size} of $T$ is $|T| = |P|+\sum_{L \in \LL}|L|+|C|$ and the \emph{element-wise size} of $T$ is the tuple $\|T\| = (|P|,\sum_{L \in \LL}|L|,|C|)$. We say that $T$ is \emph{no overall larger than} $T'$ if $|T| \leq |T'|$. Also, we say that $T$ is \emph{no element-wise larger} than $T'$ if $|P| \leq |P'|$, $|\LL| \leq |\LL'|$, and $|C| \leq |C'|$.
\end{definition}

Proposition~\ref{prop2} implies that a template of smaller size can possibly be more permissive. However, the following example shows templates that are of different sizes but are equally permissive.

\begin{example*}
Figure~\ref{ex} provides three almost-sure winning strategy templates for $\eventually w$: $T_i = (\emptyset,\emptyset,C_i)$ for $i \in \{1,2,3\}$ with $C_1 = \{(u,v),(v,u)\}$, $C_2 = \{(u,v)\}$, and $C_3 = \{(v,u)\}$. We have that $\mathcal{L}(\psi_{T_1}) = \mathcal{L}(\psi_{T_2}) = \mathcal{L}(\psi_{T_3})$, but $C_2 \subsetneq C_1$ and $C_3 \subsetneq C_1$.

\begin{figure}[t]
\centering
\begin{tikzpicture}[
    cir/.style={minimum size=0pt},
    box/.style={minimum size=25pt, regular polygon, regular polygon sides=4, inner sep=0pt}
]
    \node [state, box] at (  0, 0) (u) {$u$};
    \node [state, box] at (1.4, 0) (v) {$v$};
    \node [state, cir] at (2.8, 0) (w) {$w$};

    \node              at (1.4, 1) {$T_1$};

    \path ($(u.0)+(0,0.5ex)$) edge node[above]{$C$} ($(v.180)+(0,0.5ex)$);
    \path ($(v.180)-(0,0.5ex)$) edge node[below]{$C$} ($(u.0)-(0,0.5ex)$);
    \path ($(v.0)+(0,0.5ex)$) edge ($(w.180)+(0,0.5ex)$);
    \path ($(w.180)-(0,0.5ex)$) edge ($(v.0)-(0,0.5ex)$);
\end{tikzpicture}\qquad\qquad\begin{tikzpicture}[
    cir/.style={minimum size=0pt},
    box/.style={minimum size=25pt, regular polygon, regular polygon sides=4, inner sep=0pt}
]
    \node [state, box] at (  0, 0) (u) {$u$};
    \node [state, box] at (1.4, 0) (v) {$v$};
    \node [state, cir] at (2.8, 0) (w) {$w$};

    \node              at (1.4, 1) {$T_2$};

    \path ($(u.0)+(0,0.5ex)$) edge node[above]{$C$} ($(v.180)+(0,0.5ex)$);
    \path ($(v.180)-(0,0.5ex)$) edge node[below]{$\phantom{C}$} ($(u.0)-(0,0.5ex)$);
    \path ($(v.0)+(0,0.5ex)$) edge ($(w.180)+(0,0.5ex)$);
    \path ($(w.180)-(0,0.5ex)$) edge ($(v.0)-(0,0.5ex)$);
\end{tikzpicture}\qquad\qquad\begin{tikzpicture}[
    cir/.style={minimum size=0pt},
    box/.style={minimum size=25pt, regular polygon, regular polygon sides=4, inner sep=0pt}
]
    \node [state, box] at (  0, 0) (u) {$u$};
    \node [state, box] at (1.4, 0) (v) {$v$};
    \node [state, cir] at (2.8, 0) (w) {$w$};

    \node              at (1.4, 1) {$T_3$};

    \path ($(u.0)+(0,0.5ex)$) edge node[above]{$\phantom{C}$} ($(v.180)+(0,0.5ex)$);
    \path ($(v.180)-(0,0.5ex)$) edge node[below]{$C$} ($(u.0)-(0,0.5ex)$);
    \path ($(v.0)+(0,0.5ex)$) edge ($(w.180)+(0,0.5ex)$);
    \path ($(w.180)-(0,0.5ex)$) edge ($(v.0)-(0,0.5ex)$);
\end{tikzpicture}
\caption{Three winning strategy templates for $\eventually w$. Their permissiveness are equal but $T_1$ is larger than $T_2,T_3$.}\label{ex}
\end{figure}
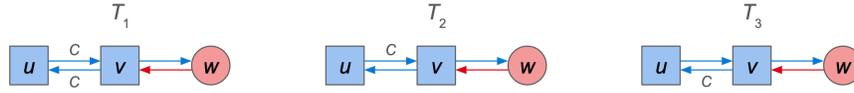
\end{example*}

Given a template $T$, the problem of constructing a smallest template $\widehat{T}$, in term of either overall or element-wise size, such that $\mathcal{L}(\psi_{\widehat{T}}) = \mathcal{L}(\psi_T)$ can be of interest when the memory of a controller is constrained, such as in a case of embedded devices. Also, small templates may yield less conflict when being composed. We \change{have} yet to explore this problem in depth and leave it as a future work.

\section{Extracting Strategies from Templates}\label{sec5}
Strategy templates constructed in Sect.~\ref{sec3}--\ref{sec99} provide useful information about restrictions on edges of the game. From these templates, our goal is to extract a strategy for player Even that satisfies the winning criterion and objective. The term \emph{template} in this section can refer to both almost-sure and positive winning strategy template.

By Def.~\ref{def:as_tem} and~\ref{def:pos_tem}, it suffices to construct a strategy $\sigma_\square$ of player Even from a winning strategy template $T$ such that $\Pr^{\sigma_\square,\sigma_\bigcirc}_{v_0}[\psi_T] = 1$ for any $v_0 \in W_\square$ (or $v_0 \in W_\square \cup W_?$) and $\sigma_\bigcirc$. The work~\cite{DBLP:conf/cav/AnandNS23} proposed the procedure \textsc{Extract}, shown below, to construct player Even's strategy $\tilde{\sigma}_\square : V^* \times V_\square \to V$ (i.e., a \emph{pure} strategy). By construction, paths generated by $\tilde{\sigma}_\square$ satisfy $\psi_T$ with probability 1.

\begin{center}\setlength{\fboxsep}{10pt}\fbox{\parbox{0.94\textwidth}{
$\textsc{Extract}(G = (V,E,(V_\square,V_\bigcirc,V_\mathlarger{\triangle})),T = (P,\LL,C))$
\begin{enumerate}[nolistsep]
    \item Remove all edges in $P$ and $C$ from $G$
    \item $\tilde{\sigma}_\square(v)$ alternates between all edges available at $v$\label{algo_extract:line2}
\end{enumerate}
}}
\end{center}

The procedure \textsc{Extract} is simple yet restrictive on the constraint of co-live edges. More precisely, a strategy template $T = (P,\LL,C)$ requires all edges in $C$ to be used only finitely often in a path. However, $\tilde{\sigma}_\square$ does not allow any usage of those edges in a path at all. Although this restriction does not affect the correctness of $\tilde{\sigma}_\square$, we prefer a winning strategy constructed to be \emph{permissive}, defined in term of formal language as follows.

\begin{definition}[Language generated by Strategy]\label{def10}
Given an SG $G$ and a strategy $\sigma_\square$ of player Even, the \emph{language generated by $\sigma_\square$} is $\mathcal{L}(\sigma_\square) \subseteq V^\omega$ containing all infinite paths $v_0 v_1 \ldots$ such that, for all $i \in \mathbb{N}$, if $v_i \in V_\square$, then $\sigma_\square(v_0 \ldots v_i)(v_{i+1}) > 0$.
\end{definition}

Definition~\ref{def10} defines generated languages for \emph{mixed} strategies (i.e., $\sigma_\square$ is a function $\sigma_\square : V^* \times V_\square \to \mathcal{D}(v)$). For pure strategies, we simply replace the last condition with ``$\sigma_\square(v_0 \ldots v_i) = v_{i+1}$''.

\begin{definition}[Permissiveness of Strategy]\label{defn:permissive}
Given an SG $G$ and two player Even's strategies $\sigma_\square$ and $\sigma'_\square$, we say that $\sigma_\square$ is \emph{no more permissive than} $\sigma'_\square$ if $\mathcal{L}(\sigma_\square) \subseteq \mathcal{L}(\sigma'_\square)$. Also, we say that $\sigma'_\square$ is \emph{more permissive than} $\sigma_\square$ if $\mathcal{L}(\sigma_\square) \subsetneq \mathcal{L}(\sigma'_\square)$.
\end{definition}

Notice that Def.~\ref{defn:permissive} does not require that a path in $\mathcal{L}(\sigma_\square)$ satisfies a winning objective. Thereby, the maximally permissive strategy is the one that allows all paths, corresponding to the template $(\emptyset, \emptyset, \emptyset)$. However, we focus only on almost-sure/positive winning strategies and aim for the strategy to be as permissive as possible. Below, we present a procedure to construct a strategy $\hat{\sigma}_\square$ from a strategy template using parameters $\alpha < 1$ and $\beta \geq 1$. These parameters balance between the permissiveness and the speed of reaching key target vertices needed to satisfy the winning objective.

Our proposed procedure \textsc{ParameterizedExtract} constructs a strategy $\hat{\sigma}_\square$ under which generated paths satisfy a strategy template with probability 1. An infinite path generated by $\hat{\sigma}_\square$ can use edges in $C$. Nevertheless, every time $\hat{\sigma}_\square$ uses an edge in $C$, the probability that it is used again becomes smaller. Hence, the probability that an edge in $C$ is used infinitely often is zero, complying with the requirement of $T = (P,\LL,C)$. We also modify the extraction procedure further by increasing the probability that an edge in live-groups in $\LL$ is used again once it is used. In this way, a path targets edges in live-groups more often.

\begin{center}\setlength{\fboxsep}{10pt}\fbox{\parbox{0.94\textwidth}{
$\textsc{ParameterizedExtract}(G = (V,E,(V_\square,V_\bigcirc,V_\mathlarger{\triangle})),T = (P,\LL,C))$
\begin{enumerate}[nolistsep]
    \item Remove all edges in $P$ from $G$
    \item For $v \in V_\square$ and $v' \in V$, set $d(v)(v') \gets 1$ if $(v,v') \in E$ and $0$ otherwise
    \item Define $\hat{\sigma}_\square(v_0 \ldots v)(v') = d(v)(v')/\sum_{v'' \in V} d(v)(v'') \in \left[0,1\right]$
    \item When $(v,v') \in C$ is used, update $d(v)(v') \gets \alpha \cdot d(v)(v')$ where $\alpha < 1$
    \item When $(v,v') \in L$ for some $L \in \LL$ is used, update $d(v)(v') \gets \beta \cdot d(v)(v')$ where $\beta \geq 1$
\end{enumerate}
}}
\end{center}

We prove the permissiveness of our winning strategy in the following theorem. We emphasize that the original procedure \textsc{Extract} of~\cite{DBLP:conf/cav/AnandNS23} considers pure strategies. However, our definition of strategies is mixed. Thus, it is not surprising that \textsc{ParameterizedExtract} can be more permissive than \textsc{Extract}. Notice also that one could generalize \textsc{Extract} to construct mixed strategies by changing Line~\ref{algo_extract:line2} to ``$\tilde{\sigma}_\square(v)$ chooses an edge available at $v$ uniformly at random''. Nonetheless, this generalization still completely prohibits the usage of co-live edges in $C$, which is allowed to be used finitely often by our proposed procedure.

\begin{theorem}
Given an SG $G$ and a strategy template $T = (P,\LL,C)$. Let $\tilde{\sigma}_\square$ and $\hat{\sigma}_\square$ follow $\textnormal{\textsc{Extract}}(G,T)$ and $\textnormal{\textsc{ParameterizedExtract}}(G,T)$, respectively. Then, $\tilde{\sigma}_\square$ is no more permissive than $\hat{\sigma}_\square$. Moreover, if there is an infinite path $\bar{v} = v_0 v_1 \ldots \in \mathcal{L}(\hat{\sigma}_\square)$ such that $(v_i,v_{i+1}) \in C$ for some $i \in \mathbb{N}$, then $\hat{\sigma}_\square$ is more permissive than $\tilde{\sigma}_\square$.
\end{theorem}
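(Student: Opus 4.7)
The plan is to establish the two assertions of the theorem as independent sub-claims.

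For the first assertion, I aim to show the inclusion $\mathcal{L}_{W_\square}(\tilde{\sigma}_\square) \subseteq \mathcal{L}_{W_\square}(\hat{\sigma}_\square)$ by a direct simulation argument. I take any $\bar{v} = v_0 v_1 \ldots \in \mathcal{L}_{W_\square}(\tilde{\sigma}_\square)$ and any index $i$ with $v_i \in V_\square$. Since $\tilde{\sigma}_\square$ is produced by \textsc{Extract} on the graph obtained from $G$ by deleting every edge of $P \cup C$, the condition $\tilde{\sigma}_\square(v_0 \ldots v_i) = v_{i+1}$ forces $(v_i, v_{i+1}) \in E \setminus (P \cup C)$. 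In particular, $(v_i, v_{i+1}) \notin P$, so this edge survives in the graph used by \textsc{ParameterizedExtract}, where its weight $d(v_i)(v_{i+1})$ is initialized to $1$. All updates applied to $d$ along the prefix $v_0 \ldots v_i$ only multiply entries by $\alpha \in (0,1)$ or $\beta \geq 1$, both strictly positive, so $d(v_i)(v_{i+1})$ remains strictly positive at the moment of the $i$-th decision. Hence $\hat{\sigma}_\square(v_0\ldots v_i)(v_{i+1}) = d(v_i)(v_{i+1})/\sum_{v''} d(v_i)(v'') > 0$, matching the positive-probability clause of Def.~\ref{def10}, and therefore $\bar{v} \in \mathcal{L}_{W_\square}(\hat{\sigma}_\square)$.

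For the second assertion, suppose there exists $\bar{v} = v_0 v_1 \ldots \in \mathcal{L}_{W_\square}(\hat{\sigma}_\square)$ that traverses a co-live edge $(v_i, v_{i+1}) \in C$. Because $C \subseteq E_\square$ we have $v_i \in V_\square$. In the graph used by \textsc{Extract} the edge $(v_i,v_{i+1})$ has been deleted, so no available choice at history $v_0\ldots v_i$ leads to $v_{i+1}$; that is, $\tilde{\sigma}_\square(v_0 \ldots v_i) \neq v_{i+1}$, and thus $\bar{v} \notin \mathcal{L}_{W_\square}(\tilde{\sigma}_\square)$. Combining this witness with the containment from the first assertion gives $\mathcal{L}_{W_\square}(\tilde{\sigma}_\square) \subsetneq \mathcal{L}_{W_\square}(\hat{\sigma}_\square)$, i.e., $\hat{\sigma}_\square$ is strictly more permissive than $\tilde{\sigma}_\square$.

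The only genuinely non-trivial point, and hence the step I expect to dwell on, is the bookkeeping in the first assertion: one has to verify carefully that the multiplicative updates to $d$ along an arbitrary prefix never zero out the weight of the edge chosen next. Once positivity of $d(v_i)(v_{i+1})$ is secured, the normalisation in the definition of $\hat{\sigma}_\square$ automatically yields a positive transition probability, and everything else is a direct unpacking of Def.~\ref{def10} together with the behaviour of \textsc{Extract} and \textsc{ParameterizedExtract} on $P$- and $C$-edges.
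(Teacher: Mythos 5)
Your argument is correct and follows essentially the same route as the paper's proof: the containment $\mathcal{L}_{W_\square}(\tilde{\sigma}_\square) \subseteq \mathcal{L}_{W_\square}(\hat{\sigma}_\square)$ holds by construction (the paper states this without the weight-positivity bookkeeping you supply, which is a welcome addition given that it implicitly requires $\alpha > 0$), and strictness follows from the witness path using a $C$-edge that \textsc{Extract} deletes.
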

\begin{proof}
It is clear that $\mathcal{L}(\tilde{\sigma}_\square) \subseteq \mathcal{L}(\hat{\sigma}_\square)$ by construction, making $\tilde{\sigma}_\square$ no more permissive than $\hat{\sigma}_\square$. Also, assuming the existence of an infinite path $\bar{v} = v_0 v_1 \ldots \in \mathcal{L}(\hat{\sigma}_\square)$ with $(v_i,v_{i+1}) \in C$ for some $i \in \mathbb{N}$. Then, $\bar{v} \not\in \mathcal{L}(\tilde{\sigma}_\square)$ since \textsc{Extract} removes all edges in $C$ from $G$. Under this assumption, $\mathcal{L}(\tilde{\sigma}_\square) \subsetneq \mathcal{L}(\hat{\sigma}_\square)$ and thus the theorem holds.
\end{proof}

We would like to point out that equally permissive templates may induce strategies with unequal permissiveness, depending on the strategy extraction procedure. For example, let $\sigma_{\square,T_1}$ and $\sigma_{\square,T_3}$ be constructed by \textsc{Extract} using $T_1$ and $T_3$ in Fig.~\ref{ex} as inputs. Since the procedure removes all co-live edges in $C$, the path $u (v w)^\omega$ is allowed by $\sigma_{\square,T_3}$ but prohibited by $\sigma_{\square,T_1}$. As a result, although $T_1$ and $T_3$ are equally permissive, $\sigma_{\square,T_3}$ is more permissive than $\sigma_{\square,T_1}$. Note that, in contrast, \textsc{ParameterizedExtract} outputs equally permissive strategies in this specific example.

\change{We finish by discussing the practical relevance of strategy templates. While our proposed templates and algorithms remain to be implemented, we anticipate results comparable to the non-stochastic ones in~\cite{DBLP:conf/cav/AnandNS23}. In incremental synthesis---where objectives are introduced sequentially---their empirical evaluation demonstrated at least a twofold speed-up when using strategy templates to avoid recomputing strategies from scratch. For fault-tolerant control, strategy templates can effectively reduce recomputation even when up to 30\% of player Even’s choices are disabled.

In terms of computational complexity, strategy construction algorithms match the worst-case runtime of standard algorithms for solving the respective objectives. Moreover, strategy composition and extraction can be performed efficiently at runtime. These observations underscore the potential of strategy templates for broad practical applicability.}

\section{Conclusion}
This work demonstrated how strategy templates can be extended to include stochastic games. First, we introduced almost-sure winning strategy templates. We then presented several algorithms to construct such templates for stochastic game objectives. The key idea involved incorporating additional set operators of Banerjee et al.\change{~\cite{DBLP:conf/tacas/BanerjeeMMSS22}} and the gadgets of Chatterjee et al.\change{~\cite{DBLP:conf/csl/ChatterjeeJH03}} to account for player Random. Next, we defined positive winning strategy templates and proposed template constructing and composing algorithms. We later considered permissiveness and sizes of templates and discussed their relevance. Lastly, we developed a strategy extracting procedure which balances permissiveness with the speed of reaching key target vertices necessary to satisfy the winning objective.

In future work, we aim to explore the construction of smaller templates that maintain the same permissiveness as a given template. This would enhance templates' practicality across different use cases. Also, we plan to develop templates for a broader range of winning criteria and objectives, including those defined by metric temporal logic (MTL) formulae whose operations are time-constrained. This would enable the use of templates in real-time systems.





\bibliographystyle{elsarticle-num}
\bibliography{ref}







\end{document}